\documentclass[aps,pra,twocolumn,superscriptaddress,showpacs,reprint]{revtex4-2}

\usepackage{amsthm, amsmath, amsfonts, amssymb}
\usepackage{braket}
\usepackage{graphicx}%
\usepackage{dcolumn}%
\usepackage{bm}%
\usepackage{hyperref}
\usepackage[capitalize]{cleveref}
\usepackage{quantikz}
\usepackage{makecell}
\usepackage{stmaryrd}

\usepackage{multirow}
\usepackage{comment}

\newtheorem{lemma}{Lemma}
\newtheorem{theorem}{Theorem}

\newtheorem{cor}{Corollary}
\usepackage[most]{tcolorbox}
\usepackage{longtable}
\usepackage{rotating}

\newcommand{\numphysicalqubitsnewfourerrorratefour}{4.76\times 10^{4}}
\newcommand{\targetDistancefourerrorratefour}{9}

\newcommand{\gateCountfourerrorratefour}{3.22\times 10^{6}}

\newcommand{\numphysicalqubitsnewfiveerrorratefour}{6.42\times 10^{4}}
\newcommand{\targetDistancefiveerrorratefour}{10}

\newcommand{\gateCountfiveerrorratefour}{3.85\times 10^{7}}

\newcommand{\numphysicalqubitsnewsixerrorratefour}{8.52\times 10^{4}}
\newcommand{\targetDistancesixerrorratefour}{11}

\newcommand{\gateCountsixerrorratefour}{4.48\times 10^{8}}

\newcommand{\numphysicalqubitsnewsevenerrorratefour}{1.15\times 10^{5}}
\newcommand{\targetDistancesevenerrorratefour}{12}

\newcommand{\gateCountsevenerrorratefour}{5.32\times 10^{9}}

\newcommand{\numphysicalqubitsneweighterrorratefour}{1.47\times 10^{5}}
\newcommand{\targetDistanceeighterrorratefour}{13}

\newcommand{\gateCounteighterrorratefour}{5.95\times 10^{10}}

\newcommand{\numphysicalqubitsnewnineerrorratefour}{1.85\times 10^{5}}
\newcommand{\targetDistancenineerrorratefour}{14}

\newcommand{\gateCountnineerrorratefour}{6.58\times 10^{11}}

\newcommand{\numphysicalqubitsnewtenerrorratefour}{2.36\times 10^{5}}
\newcommand{\targetDistancetenerrorratefour}{15}

\newcommand{\gateCounttenerrorratefour}{7.42\times 10^{12}}

\newcommand{\numphysicalqubitsnewelevenerrorratefour}{2.88\times 10^{5}}
\newcommand{\targetDistanceelevenerrorratefour}{16}

\newcommand{\gateCountelevenerrorratefour}{8.05\times 10^{13}}

\newcommand{\numphysicalqubitsnewtwelveerrorratefour}{3.48\times 10^{5}}
\newcommand{\targetDistancetwelveerrorratefour}{17}

\newcommand{\gateCounttwelveerrorratefour}{8.68\times 10^{14}}

\newcommand{\numphysicalqubitsnewthireteenerrorratefour}{4.25\times 10^{5}}
\newcommand{\targetDistancethireteenerrorratefour}{18}

\newcommand{\gateCountthireteenerrorratefour}{9.52\times 10^{15}}

\newcommand{\numphysicalqubitsnewfourteenerrorratefour}{5.03\times 10^{5}}
\newcommand{\targetDistancefourteenerrorratefour}{19}

\newcommand{\gateCountfourteenerrorratefour}{1.02\times 10^{17}}

\newcommand{\numphysicalqubitsnewfifteenerrorratefour}{5.89\times 10^{5}}
\newcommand{\targetDistancefifteenerrorratefour}{20}

\newcommand{\gateCountfifteenerrorratefour}{1.08\times 10^{18}}

\newcommand{\numLogicalQubitsfourerrorratethree}{217}

\newcommand{\numphysicalqubitsnewfourerrorratethree}{1.42\times 10^{5}}
\newcommand{\targetDistancefourerrorratethree}{17}

\newcommand{\classicalresourcefourerrorratethree}{2.09\times 10^{2}}
\newcommand{\classicalresourcebestfourerrorratethree}{1.20\times 10^{1}}

\newcommand{\numLogicalQubitsfiveerrorratethree}{259}

\newcommand{\numphysicalqubitsnewfiveerrorratethree}{2.03\times 10^{5}}
\newcommand{\targetDistancefiveerrorratethree}{19}

\newcommand{\classicalresourcefiveerrorratethree}{6.62\times 10^{2}}

\newcommand{\numLogicalQubitssixerrorratethree}{301}

\newcommand{\numphysicalqubitsnewsixerrorratethree}{2.82\times 10^{5}}
\newcommand{\targetDistancesixerrorratethree}{21}

\newcommand{\classicalresourcesixerrorratethree}{2.10\times 10^{3}}
\newcommand{\classicalresourcebestsixerrorratethree}{1.25\times 10^{2}}

\newcommand{\numLogicalQubitssevenerrorratethree}{357}

\newcommand{\numphysicalqubitsnewsevenerrorratethree}{3.94\times 10^{5}}
\newcommand{\targetDistancesevenerrorratethree}{23}

\newcommand{\classicalresourcesevenerrorratethree}{6.63\times 10^{3}}
\newcommand{\classicalresourcebestsevenerrorratethree}{3.95\times 10^{2}}

\newcommand{\numLogicalQubitseighterrorratethree}{399}

\newcommand{\numphysicalqubitsneweighterrorratethree}{5.56\times 10^{5}}
\newcommand{\targetDistanceeighterrorratethree}{26}

\newcommand{\classicalresourceeighterrorratethree}{2.10\times 10^{4}}
\newcommand{\classicalresourcebesteighterrorratethree}{1.25\times 10^{3}}

\newcommand{\numLogicalQubitsnineerrorratethree}{441}

\newcommand{\numphysicalqubitsnewnineerrorratethree}{7.08\times 10^{5}}
\newcommand{\targetDistancenineerrorratethree}{28}

\newcommand{\classicalresourcenineerrorratethree}{6.63\times 10^{4}}
\newcommand{\classicalresourcebestnineerrorratethree}{3.96\times 10^{3}}

\newcommand{\numLogicalQubitstenerrorratethree}{497}

\newcommand{\numphysicalqubitsnewtenerrorratethree}{9.11\times 10^{5}}
\newcommand{\targetDistancetenerrorratethree}{30}

\newcommand{\classicalresourcetenerrorratethree}{2.10\times 10^{5}}
\newcommand{\classicalresourcebesttenerrorratethree}{1.25\times 10^{4}}

\newcommand{\numLogicalQubitselevenerrorratethree}{539}

\newcommand{\numphysicalqubitsnewelevenerrorratethree}{1.12\times 10^{6}}
\newcommand{\targetDistanceelevenerrorratethree}{32}

\newcommand{\classicalresourceelevenerrorratethree}{6.63\times 10^{5}}
\newcommand{\classicalresourcebestelevenerrorratethree}{3.96\times 10^{4}}

\newcommand{\numLogicalQubitstwelveerrorratethree}{581}

\newcommand{\numphysicalqubitsnewtwelveerrorratethree}{1.36\times 10^{6}}
\newcommand{\targetDistancetwelveerrorratethree}{34}

\newcommand{\classicalresourcetwelveerrorratethree}{2.10\times 10^{6}}
\newcommand{\classicalresourcebesttwelveerrorratethree}{1.25\times 10^{5}}

\newcommand{\numLogicalQubitsthireteenerrorratethree}{637}

\newcommand{\numphysicalqubitsnewthireteenerrorratethree}{1.67\times 10^{6}}
\newcommand{\targetDistancethireteenerrorratethree}{36}

\newcommand{\classicalresourcethireteenerrorratethree}{6.63\times 10^{6}}
\newcommand{\classicalresourcebestthireteenerrorratethree}{3.96\times 10^{5}}

\newcommand{\numLogicalQubitsfourteenerrorratethree}{679}

\newcommand{\numphysicalqubitsnewfourteenerrorratethree}{1.98\times 10^{6}}
\newcommand{\targetDistancefourteenerrorratethree}{38}

\newcommand{\classicalresourcefourteenerrorratethree}{2.10\times 10^{7}}
\newcommand{\classicalresourcebestfourteenerrorratethree}{1.25\times 10^{6}}

\newcommand{\numLogicalQubitsfifteenerrorratethree}{721}

\newcommand{\numphysicalqubitsnewfifteenerrorratethree}{2.32\times 10^{6}}
\newcommand{\targetDistancefifteenerrorratethree}{40}

\newcommand{\classicalresourcefifteenerrorratethree}{6.63\times 10^{7}}
\newcommand{\classicalresourcebestfifteenerrorratethree}{3.96\times 10^{6}}

\newcommand{\numphysicalqubitsnewbbfourerrorratefour}{2.78\times 10^{4}}
\newcommand{\errorRatebbfourerrorratefour}{5.43\times 10^{-09}}

\newcommand{\numphysicalqubitsnewbbfiveerrorratefour}{3.09\times 10^{4}}
\newcommand{\errorRatebbfiveerrorratefour}{4.55\times 10^{-10}}

\newcommand{\numphysicalqubitsnewbbsixerrorratefour}{3.39\times 10^{4}}
\newcommand{\errorRatebbsixerrorratefour}{3.91\times 10^{-11}}

\newcommand{\numphysicalqubitsnewbbsevenerrorratefour}{3.78\times 10^{4}}
\newcommand{\errorRatebbsevenerrorratefour}{3.29\times 10^{-12}}

\newcommand{\numphysicalqubitsnewbbeighterrorratefour}{4.08\times 10^{4}}
\newcommand{\errorRatebbeighterrorratefour}{2.94\times 10^{-13}}

\newcommand{\numphysicalqubitsnewbbnineerrorratefour}{4.39\times 10^{4}}
\newcommand{\errorRatebbnineerrorratefour}{2.66\times 10^{-14}}

\newcommand{\numphysicalqubitsnewbbtenerrorratefour}{4.78\times 10^{4}}
\newcommand{\errorRatebbtenerrorratefour}{2.36\times 10^{-15}}

\newcommand{\numphysicalqubitsnewbbelevenerrorratefour}{5.01\times 10^{4}}
\newcommand{\errorRatebbelevenerrorratefour}{2.17\times 10^{-16}}

\newcommand{\numphysicalqubitsnewbbtwelveerrorratefour}{5.31\times 10^{4}}
\newcommand{\errorRatebbtwelveerrorratefour}{2.02\times 10^{-17}}

\newcommand{\numphysicalqubitsnewbbthireteenerrorratefour}{5.70\times 10^{4}}
\newcommand{\errorRatebbthireteenerrorratefour}{1.84\times 10^{-18}}

\newcommand{\numphysicalqubitsnewbbfourteenerrorratefour}{7.97\times 10^{4}}
\newcommand{\errorRatebbfourteenerrorratefour}{1.72\times 10^{-19}}

\newcommand{\numphysicalqubitsnewbbfifteenerrorratefour}{8.41\times 10^{4}}
\newcommand{\errorRatebbfifteenerrorratefour}{1.62\times 10^{-20}}

\begin{document}

\title{Realization of a Quantum Streaming Algorithm on Long-lived Trapped-ion Qubits}%

\author{Pradeep Niroula}
\email{pradeep.niroula@jpmchase.com}
\affiliation{Global Technology Applied Research, JPMorganChase, New York, NY 10017, USA}

\author{Shouvanik Chakrabarti}
\affiliation{Global Technology Applied Research, JPMorganChase, New York, NY 10017, USA}
\author{Steven Kordonowy}
\affiliation{Global Technology Applied Research, JPMorganChase, New York, NY 10017, USA}
\author{Niraj Kumar}
\affiliation{Global Technology Applied Research, JPMorganChase, New York, NY 10017, USA}
\author{Sivaprasad Omanakuttan}
\affiliation{Global Technology Applied Research, JPMorganChase, New York, NY 10017, USA}
\author{Michael A. Perlin}
\affiliation{Global Technology Applied Research, JPMorganChase, New York, NY 10017, USA}

\author{M.S. Allman}
\affiliation{Quantinuum, Broomfield, CO 80021, USA}
\author{J.P. Campora III}
\affiliation{Quantinuum, Broomfield, CO 80021, USA}
\author{Alex~Chernoguzov}
\affiliation{Quantinuum, Broomfield, CO 80021, USA}
\author{Samuel F. Cooper}
\affiliation{Quantinuum, Broomfield, CO 80021, USA}
\author{Robert D. Delaney}
\affiliation{Quantinuum, Broomfield, CO 80021, USA}
\author{Joan M. Dreiling}
\affiliation{Quantinuum, Broomfield, CO 80021, USA}
\author{Brian Estey}
\affiliation{Quantinuum, Broomfield, CO 80021, USA}
\author{Caroline Figgatt}
\affiliation{Quantinuum, Broomfield, CO 80021, USA}
\author{Cameron Foltz}
\affiliation{Quantinuum, Broomfield, CO 80021, USA}
\author{John P. Gaebler}
\affiliation{Quantinuum, Broomfield, CO 80021, USA}
\author{Alex Hall}
\affiliation{Quantinuum, Broomfield, CO 80021, USA}
\author{Ali A. Husain}
\affiliation{Quantinuum, Brooklyn Park, MN 55422, USA}
\author{Akhil Isanaka}
\affiliation{Quantinuum, Broomfield, CO 80021, USA}
\author{Colin J. Kennedy}
\affiliation{Quantinuum, Broomfield, CO 80021, USA}
\author{Nikhil Kotibhaskar}
\affiliation{Quantinuum Partnership House, London SW1P 1BX, UK}
\author{Ivaylo S. Madjarov}
\affiliation{Quantinuum, Broomfield, CO 80021, USA}
\author{Michael Mills}
\affiliation{Quantinuum, Broomfield, CO 80021, USA}
\author{Alistair R. Milne}
\affiliation{Quantinuum Partnership House, London SW1P 1BX, UK}
\author{Louis Narmour}
\affiliation{Quantinuum, Broomfield, CO 80021, USA}
\author{Annie J. Park}
\affiliation{Quantinuum, Broomfield, CO 80021, USA}
\author{Adam P. Reed}
\affiliation{Quantinuum, Broomfield, CO 80021, USA}
\author{Kartik Singhal}
\affiliation{Quantinuum, Broomfield, CO 80021, USA}
\author{Anthony Ransford}
\affiliation{Quantinuum, Broomfield, CO 80021, USA}
\author{Justin G. Bohnet}
\affiliation{Quantinuum, Broomfield, CO 80021, USA}
\author{Brian Neyenhuis}
\affiliation{Quantinuum, Broomfield, CO 80021, USA}
\author{Rob~Otter}
\affiliation{Global Technology Applied Research, JPMorganChase, New York, NY 10017, USA}
\author{Ruslan~Shaydulin}
\email{ruslan.shaydulin@jpmchase.com}
\affiliation{Global Technology Applied Research, JPMorganChase, New York, NY 10017, USA}

\date{\today}%

\begin{abstract}
Large classical datasets are often processed in the streaming model, with data arriving one item at a time. In this model, quantum algorithms have been shown to offer an unconditional exponential advantage in space. However, experimentally implementing such streaming algorithms requires qubits that remain coherent while interacting with an external data stream. In this work, we realize such a data-streaming model using Quantinuum Helios trapped-ion quantum computer with long-lived qubits that communicate with an external server. We implement a quantum pair sketch, which is the primitive underlying many quantum streaming algorithms, and use it to solve Hidden Matching, a problem known to exhibit a theoretical exponential quantum advantage in space.  
Furthermore, we compile the quantum streaming algorithm to fault-tolerant quantum architectures based on surface and bivariate bicycle codes and show that the quantum space advantage persists even with the overheads of fault-tolerance.
\end{abstract}

\maketitle

\section*{Introduction}
In many practical settings, data is generated faster than it can be stored, or the total volume of data is so large that it cannot be loaded into local computer memory in its entirety.
These settings motivate the streaming model of computation, in which the data is
received and processed  %
one item at a time, %
while the algorithm maintains only a small sketch without ever storing the entire dataset. 
A sketch is a compact representation of the dataset that enables the algorithm to perform a restricted set of queries. 
The goal is to minimize the space requirements of the sketch required to achieve a given solution quality.
Algorithms for a wide range of problems have been developed in the streaming model~\cite{Muthukrishnan2005}. In this work, we restrict our attention to the single-pass model, wherein each data point is given to the algorithm only once.

Quantum algorithms are known to provide a space advantage in the streaming model~\cite{kallaugher2024howto, nadya, kallaugher_2022_quantum_triangle_counting, BHM1, BHM2, BHM3, 10.5555/3179430.3179435, 10.4230/lipics.ccc.2017.23}. For natural graph problems, quantum algorithms have been proposed that use polynomially~\cite{kallaugher_2022_quantum_triangle_counting} or even exponentially~\cite{nadya} fewer qubits than the number of bits required by the best possible classical algorithm. Importantly, unlike for many instances of superpolynomial quantum advantage in runtime, such as in factoring large integers or sampling from random circuits, the space advantages are unconditional as it is possible to derive a lower bound on the number of bits required by any classical algorithm to achieve a given solution quality. Such lower bounds are typically not available for classical runtimes. Interestingly, the exponential space advantage in the streaming model contrasts with the at-most-quadratic advantage possible in the standard data access model~\cite{watrous1999space}.

Experimental realization of quantum streaming algorithms requires the adaptive application of quantum gates conditioned on a classical data stream. Although related primitives have been demonstrated in the context of error correction \cite{google2025quantum,Daguerre2025} and adaptive quantum circuits~\cite{Crcoles2021,2302.03029,Bumer2024,CarreraVazquez2024}, the streaming model introduces an additional challenge of network latency, requiring the quantum state to remain coherent throughout the data stream. The need for long coherence times poses major challenges for some hardware platforms, as qubit coherence times (e.g., less than a millisecond for superconducting transmon qubits~\cite{google2025quantum}) must exceed the network latency (typically on the order of milliseconds on the cloud).

The quantum pair sketch \cite{kallaugher2024howto} was recently shown to unify all known quantum streaming algorithms for graph problems that achieve a space advantage in the single-pass model.
Specifically, quantum algorithms for Hidden Matching, triangle counting, and maximum directed cut problems can all be viewed as hybrid quantum-classical algorithms that use the quantum processor solely to manipulate the quantum pair sketch. 
Consequently, implementing the quantum pair sketch opens the door to realizing all these streaming algorithms.

\begin{figure*}[t]
    \centering
 \includegraphics[width=\linewidth]{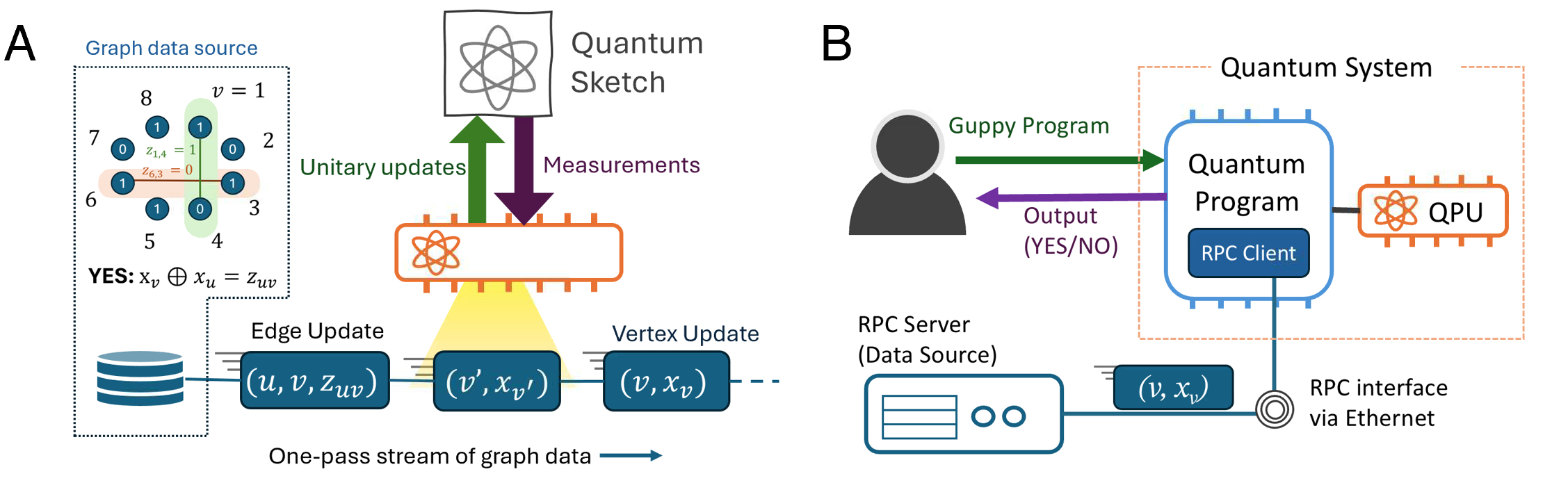}
    \caption{\textbf{A}) \textbf{Overview of the quantum streaming algorithm for Hidden Matching.} A quantum computer with sequential, one-time access to a data stream maintains a quantum sketch on which it can apply unitary gates and perform measurements as it reads the data stream. In the Hidden Matching problem, the data stream consists of vertex $(v, x_v)$ and edge-updates $(u, v, z_{uv})$, where $u,v$ are vertices, and $x_{v}$ and $z_{uv}$ are vertex and edge labels, respectively. In the YES instance of the problem (illustrated), the relation $x_{u}\oplus x_{v} = z_{uv}$ holds for every edge. \textbf{B) Experimental Setup.}  A user submits a program, written in the high-level language Guppy \cite{koch2025guppy}, %
    to a system that consists of a quantum processor (QPU) and a classical control system. The control system contains a remote procedure call (RPC) client that queries a streaming server initialized with a Hidden Matching problem instance for vertex and edge updates (only vertex updates are illustrated), and generates quantum instructions, based on the received updates, in real-time. Upon termination, the quantum system returns the output to the user.
}
    \label{fig:streaming-overview}
\end{figure*}

In this work, we realize the quantum pair sketch and use it to execute a quantum streaming algorithm for Hidden Matching problem on a trapped-ion quantum processor. 
Our experiment %
is enabled by a series of technical advancements that allow the quantum processor to interact with an externally-hosted data-stream. Although such a remote communication capability has broad applications -- for example, in reducing latencies in certified randomness protocols based on random circuit sampling~\cite{jpmccertrand2} -- we use this capability towards a problem with unconditional advantage in space resources. 

The fidelity of the quantum processor in our experiment is insufficient to achieve an unconditional space advantage, necessitating the introduction of fault tolerance. 
We compile both the quantum sketch and the end-to-end Hidden Matching algorithm to fault-tolerant architectures based on the surface and bivariate bicycle codes, showing that only tens of thousands of physical qubits are required to obtain a space advantage.
Although further algorithmic improvements are needed before quantum advantage becomes practical, our results demonstrate that the streaming model offers a viable route to experimental resource advantage for natural problems.

\section*{Exponential Quantum Space Advantage for Hidden Matching Problem}

We use the streaming capability to implement the quantum pair sketch to solve the streaming version of the Hidden Matching (HM) problem, illustrated in Fig.~\ref{fig:streaming-overview}A. An HM problem is parameterized by the size (number of nodes) of a graph, $n$, and a parameter $\alpha \in [0, 1/4]$. An HM problem instance is specified by three inputs: (1) a bitstring $x \in \{ 0, 1\}^n$ , (2) an $\alpha n$-edge (partial) matching $M$ (i.e., a set of $\lvert M\rvert=\alpha n$ pairwise non-adjacent edges), and (3) an additional bitstring $z \in \{0, 1\}^{|M|}$. The bitstring $x$ represents a binary assignment of the set of vertices $[n]$, with $x_v$ denoting the assignment of vertex $v$. $M$ is a (partial) matching of a graph on $V = [n]$, and $z$ contains the edge labels of $M$. The HM problem promises that either the string $x$ matches the edge labels $z$ for all edges (YES case), or $x$ does not match $z$ for any of the edges (NO case). That is, in the YES case $x_u \oplus x_v = z_{uv}$ for all $(u,v) \in M$, while in the NO case $x_v \oplus x_v =  \overline{z_{uv}}$ for all $(u,v) \in M$ .

The stream is composed of vertex updates $(v, x_v)$ and edge updates $(u, v, z_{uv})$. 
The goal of the streaming algorithm is to decide the promise problem (distinguish between the YES and NO cases) better than random guess while preserving as little of the data-stream as possible. Although HM is the simplest setting for illustrating quantum space advantage, the HM protocol may be adapted into schemes for quantum money \cite{kumar2019practically} or as a strong extractor of randomness \cite{BHM3}. Furthermore, the same quantum pair sketch also underpins algorithms for triangle counting and maximum directed cut~\cite{kallaugher2024howto}. Details of the compilation of the general quantum pair sketch are provided in Appendix~\ref{sec:quantum_pair_sketch_complexity}.

The asymptotically optimal classical algorithm for the HM problem, detailed in \cref{sec:classical_supp}, stores a randomly sampled subset of the incoming $n$ vertices and checks if an edge $(u,v)$ in the partial matching belongs to the sampled subset. This mirrors the scaling necessary for a collision of independent events (for example, the birthday problem) and has a space complexity of $O(\sqrt{n/\alpha})$. In particular, for a success probability of $2/3$, the best known classical algorithm requires at least $\lceil  \sqrt{\ln(3) n/\alpha} \rceil$ bits of information (see \cref{sec:classical_supp} for a derivation of the constant). 

While the classical algorithm outlined above is constructive, communication complexity arguments yield a tighter asymptotic lower bound of $\Omega(\sqrt{n})$ for all possible classical algorithms \cite{bar2004exponential}. In order to facilitate a direct comparison with the quantum algorithm, we determine the exact lower-bound to be $\sqrt{(n-1)/\alpha}/(6e\sqrt{2}\ln(2))$. The details of this calculation are presented in \cref{sec:classical_supp_lower_bound}.  

On the other hand, a quantum sketch for HM may store useful information about $n$ vertices using only $O(\log(n))$ qubits \cite{BHM2, kallaugher2024howto}. In particular, for a success probability of $2/3$, the algorithm only uses at most $\lceil 1.5 \cdot \left(\log(n) + 2\right) / \alpha \rceil$ qubits (See \cref{sec:quantum-supp} for how we obtain the constants). In the hybrid algorithm for HM, the quantum stage is followed by a classical stage that stores only a single edge. We describe the algorithm below and refer the reader to Ref.~\cite{kallaugher2024howto} for a pedagogical introduction.

The quantum algorithm initializes the sketch in the superposition $\frac{1}{\sqrt{2n}} \sum_{v \in [n], b \in \{0,1\}}\ket{v,0,b}$. The three registers respectively encode the vertex index $v$, its label $x_v$, and an ancillary bit used for parity measurements. In the beginning of the algorithm, all of the vertices are labeled $x_v=0$, and this label is progressively updated as vertex information is streamed to the program. Storing such a quantum workspace requires $\lceil \log{n} \rceil + 2$ qubits.

A classical control system  mediates the interaction between the external network and the quantum state.
Upon reading a vertex update $(v, x_v)$ from the stream, if $x_v = 1$, the controller it sends instructions to the quantum computer to update $\ket{v,0,b} \mapsto \ket{v,1,b}$ for both $b \in \{0,1\}$. This update is implemented with a multi-controlled NOT gate that flips the middle register conditioned on the first register.  For simplicity of exposition, we assume that vertices arrive before edges, although the algorithm does not rely on this ordering.
This choice of ordering does not impact the lower bound for the classical algorithm.
We note that with this ordering, the classical stage of the hybrid algorithm specified in Ref.~\cite{kallaugher2024howto} is never invoked. 

Upon receiving an edge update $(u, v, z_{uv})$, the quantum computer makes a sequence of four measurements described by a projective-valued measure (PVM)
defined as follows.
For $a,b \in \{0,1\}$, let $\Pi_{u,v;a,b}^\pm$ be the projection onto $\ket{u,a,a\oplus b} \pm \ket{v,b,a\oplus b}$, and let $\Pi_{u,v;a,b}^0 = I - \Pi_{u,v;a,b}^+ - \Pi_{u,v;a,b}^-$ be the projector orthogonal to both $\Pi_{u,v;a,b}^+$ and $\Pi_{u,v;a,b}^-$. The PVM used by the quantum algorithm is
\begin{align}
    \mathcal{M}_{u,v;a,b} := \left\{ \Pi_{u,v;a,b}^+, \Pi_{u,v;a,b}^-,
    \Pi_{u,v;a,b}^0
    \right\}.
\end{align}
We sequentially perform the PVM measurements for all combination of $a, b \in \{0, 1\}$, resulting in up to four measurements per edge. If any measurement yields $\Pi_{u,v;a,b}^+$ or $\Pi_{u,v;a,b}^-$, the program terminates; otherwise, it proceeds onto the measurement or eventually the next edge. In particular, measurement of $\Pi_{u,v;a,b}^+$ suggests the observation of a pair with $x_u = a$ and $x_v = b$, and the program thus outputs YES if $a \oplus b \oplus z_{uv}$ is zero, and NO otherwise. If $\Pi_{u,v;a,b}^-$ is observed, the program terminates, outputting NULL. Similarly, if the stream reaches the end, without ever measuring $\Pi_{u,v;a,b}^{\pm}$ the program terminates with NULL. 
These steps are illustrated in Fig.~\ref{fig:streaming-overview}A. 
This also means that the length of the program is dynamic. For a problem with $\alpha n$ edges in the stream, the program may terminate after any of the $4\alpha n$ PVM measurements.

Under idealized noiseless quantum execution, the probability of a quantum sketch returning a correct answer $p_{\rm correct}$ is $\alpha$, the probability of it returning an incorrect answer $p_{\rm wrong}$ is at most $\alpha/2$ and the probability of it returning the null answer is $p_{\rm{null}} = 1-p_{\rm correct} - p_{\rm wrong}$.
The probability of solving an HM problem can be boosted to 2/3
by running this algorithm in parallel on $O(1/\alpha)$-many copies and taking the majority vote between YES and the NO results.
The number of necessary copies may be determined by numerically calculating overall success probabilities after majority vote; see \cref{fig:BHM_majority_vote} in \cref{sec:quantum-supp}. Furthermore, these probabilities are affected by noise in our experiments. 
A simple noise model illustrates this effect: if the final state is depolarized, $\tilde{\rho} = \gamma \rho + (1-\gamma) I$, the gap between success and failure probabilities is roughly scaled to $\gamma \alpha/2$, so the number of copies required to boost the overall success of majority vote grows as $O(1/\gamma \alpha)$.

\begin{figure*}[!ht]
    \centering
    \includegraphics[width=\linewidth]{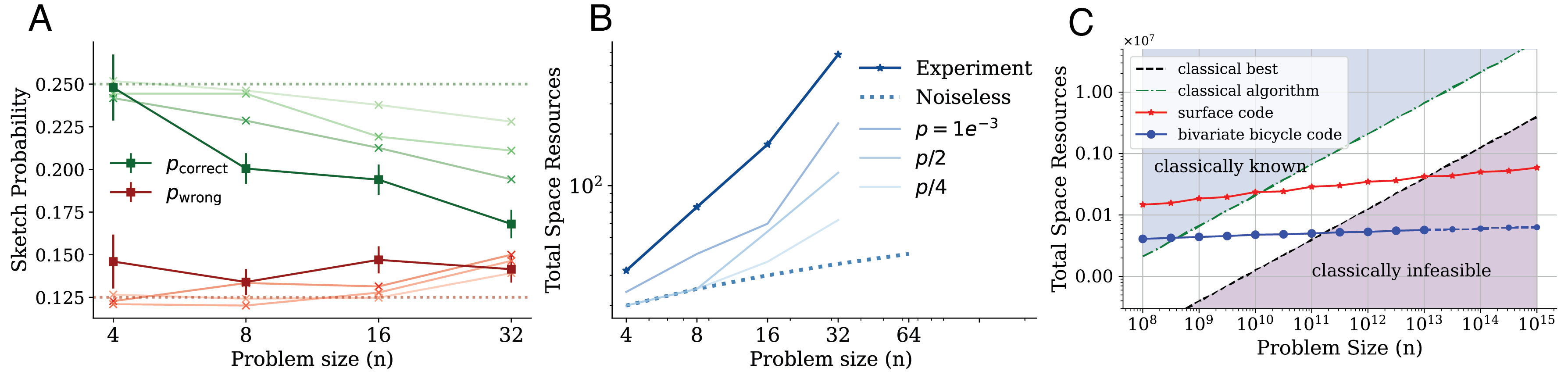}
    \caption{\textbf{A}) \textbf{Experimental results.} Sketch fidelities---the probabilities that the sketch yields the correct result (green) or an incorrect result (red)---for different problem sizes and simulated noise levels (lighter shades; two-qubit gate depolarization errors are given in the legend for panel B). The experimental probabilities (dark solid lines) were obtained from experiments with $500, 2000, 2000$ and $2000$ shots for $n=4,8, 16$ and $32$ respectively. The ideal sketch fidelity (horizontal dotted lines) is independent of the problem size, giving $p_{\rm correct} = 1/4$ and $p_{\rm wrong} = 1/8$.  
    \textbf{B}) \textbf{Space overhead for boosting success probability.} Given the sketch fidelities at different problem sizes and noise strengths, the total quantum space required to obtain an algorithm success probability of $2/3$ using majority vote, for noisy experiments (solid blue line) and simulations with two-qubit depolarizing rates (see legend for error rates $p$). Dotted blue line shows the total space required for a noiseless implementation. Classical space requirements are lower than noiseless quantum and are not pictured. 
    \textbf{C}) \textbf{Fault-tolerant break-even conditions.} The total space required to break even with the best known classical algorithm (green) and a classical lower bound (black) is calculated using both the surface code (red) and a bivariate bicycle code (solid blue line for the two-gross code, dashed blue line for a higher-distance bivariate bicycle code; see \cref{sec:FT_resource_estimation} for details). 
    The quantum resource counts include the physical qubits required for $T$ and CCZ resource-state factories.
    }  
    \label{fig:experiment}
\end{figure*}

\section*{Experimental Realization of~Quantum~Streaming~Model}

To enable the quantum streaming model, a quantum computer must maintain coherence in its qubits while requesting information about how to proceed with the execution of a quantum program. Low latency and minimal qubit idle times are critical for sustaining coherent evolution during this process.
To meet these requirements, we developed three software components.
First, we implemented a real-time execution engine that runs compiled LLVM programs natively on Quantinuum’s Helios control system, which contains a \textit{Quantum Instruction Set} (QIS) to execute native gates, initialization, and measurement on ``virtual qubits’’~\cite{cross2022openqasm}. Second, we developed Guppy~\cite{koch2025guppy}, a high-level programming language that compiles to this QIS executable format.  Finally, we introduced the Helios runtime \textit{Helios runtime}~\cite{helios}, which in real-time continuously receives gating operations on virtual qubits from the executing QIS program and just-in-time compiles these gates into operations on Helios's physical qubits. 
The resulting physical instructions comprise transport primitives that transport ions into quantum logic regions, and ``quantum logic’’ operations that implement parallelized 1Q and 2Q native gates, initialization, and measurement~\cite{helios}.

In addition to the software above, this experiment is enabled by a real-time remote procedure call (RPC) interface that enables communication over a high-speed network with the software outlined above, allowing classical data to be exchanged, processed, and reintegrated into the ongoing quantum evolution being driven by the executing quantum program.
The experimental setup consists of a classical streaming server (connected to the control system via Ethernet, Fig.~\ref{fig:streaming-overview}B),  the three pieces of software discussed in the previous paragraph that execute together on the control system, and the RPC interface between the two aforementioned components.

The streaming server listens for RPC requests from the executing program and returns information to the control system that determines how the evolution of quantum state will proceed. The executing QIS program is compiled with these RPCs and contains compiled code to unpack the data received from the streaming server into an array that then can be interpreted using high-level programming features in Guppy, such as loops that index into this array and branching conditioned on the data in this array that specifies what gates to apply on allocated virtual qubits. Once the Helios runtime receives the gate list for virtual qubits (generated based on the data stream), it locates the corresponding physical qubits and runs a sorting algorithm that transports them into the quantum logic regions where the gates are executed. It does so with minimal latency,
and the use of real-time location tracking prevents unnecessary ion transport for conditional gates that ultimately are not executed.

\medskip

\begin{table}[!ht]
\begin{tabular}{|c|c|cccc|}
\hline
\multirow{2}{*}{$n$} & \multirow{2}{*}{Space} & \multicolumn{4}{c|}{Operations}                                                                                            \\ \cline{3-6} 
                     &                        & \multicolumn{1}{c|}{$H$}   & \multicolumn{1}{c|}{$CX$}              & \multicolumn{1}{c|}{$C^{\log(n)+1}X$} & $C^{\log(n)+3}X$ \\ \hline
4                    & 4                      & \multicolumn{1}{c|}{10}   & \multicolumn{1}{c|}{28}              & \multicolumn{1}{c|}{4}                & 8                \\ \hline
8                    & 5                      & \multicolumn{1}{c|}{20}   & \multicolumn{1}{c|}{75}              & \multicolumn{1}{c|}{8}                & 16               \\ \hline
16                   & 6                     & \multicolumn{1}{c|}{36}  & \multicolumn{1}{c|}{186}             & \multicolumn{1}{c|}{16}               & 32               \\ \hline
32                   & 7                     & \multicolumn{1}{c|}{69}  & \multicolumn{1}{c|}{441}             & \multicolumn{1}{c|}{32}               & 64               \\ \hline
64                   & 8                     & \multicolumn{1}{c|}{134}  & \multicolumn{1}{c|}{1016}            & \multicolumn{1}{c|}{64}               & 128              \\ \hline
$n$                  & \makecell{$\log(n)$\\ $+ 2$}         & \multicolumn{1}{c|}{$2n+\log n$} & \multicolumn{1}{c|}{\makecell{$(2n-1)\cdot$ \\ $(\log n + 2)$}} & \multicolumn{1}{c|}{$n$}              & $2n$             \\ \hline
\end{tabular}
\caption{\textbf{Logical resource estimates for the HM algorithm.} Space and worst-case logical gate counts to run the hybrid quantum-classical streaming algorithm for $HM$ with $\alpha=1/4$.  When implementing this algorithm non-fault tolerantly, each $k$-controlled $X$ (i.e., $C^k X$) gate gets decomposed into $k-1$ three-qubit Toffoli ($CCX$) gates. 
}
\label{tab:gate_count}
\end{table}

{
\renewcommand{\arraystretch}{1.5}
\begin{table*}[!ht]
\centering
\begin{tabular}{|c|c|c|c|c|c|c|c|c|}
\hline
\multirow{3}{*}{$n$} 
  & \multirow{3}{*}{\makecell*[c]{logical\\qubits}} 
  & \multirow{3}{*}{\makecell*[c]{Toffoli\\gates}} 
  & \multicolumn{2}{c|}{surface code} 
  & \multicolumn{1}{c|}{two-gross code}
  & \multicolumn{2}{c|}{classical bits} \\
\cline{4-8}
  & & & \multicolumn{2}{c|}{physical qubits}  
    & \multicolumn{1}{c|}{physical qubits  }
    & \multirow{2}{*}[0.5ex]{\makecell*[c]{best \\ known}}  
    & \multirow{2}{*}[0.5ex]{\makecell*[c]{lower\\ bound}} \\ 
\cline{4-6}
  & & & $p=10^{-3}$ & $p=10^{-4}$  & $p=10^{-4}$ & & \\
\hline 
$10^{10}$  & \numLogicalQubitstenerrorratethree & $\gateCounttenerrorratefour$ 
  & $\numphysicalqubitsnewtenerrorratethree$ 
  & $\numphysicalqubitsnewtenerrorratefour$
  & $\numphysicalqubitsnewbbtenerrorratefour$ 
  & $\classicalresourcetenerrorratethree$
  & $\classicalresourcebesttenerrorratethree$ \\
\hline
$10^{11}$  & \numLogicalQubitselevenerrorratethree & $\gateCountelevenerrorratefour$ 
  & $\numphysicalqubitsnewelevenerrorratethree$ 
  & $\numphysicalqubitsnewelevenerrorratefour$
  & $\numphysicalqubitsnewbbelevenerrorratefour$ 
  & $\classicalresourceelevenerrorratethree$
  & $\classicalresourcebestelevenerrorratethree$ \\
\hline
$10^{12}$  & \numLogicalQubitstwelveerrorratethree & $\gateCounttwelveerrorratefour$ 
  & $\numphysicalqubitsnewtwelveerrorratethree$ 
  & $\numphysicalqubitsnewtwelveerrorratefour$
  & $\numphysicalqubitsnewbbtwelveerrorratefour$ 
  & $\classicalresourcetwelveerrorratethree$
  & $\classicalresourcebesttwelveerrorratethree$ \\
\hline
\end{tabular}
\caption{
\textbf{Fault-tolerant resource estimates for the HM algorithm.}
Resource estimates are shown for the HM algorithm, assuming a single CCZ factory for Toffoli gates. 
We consider fault-tolerant implementations based on the rotated surface code with physical error rates of $p=10^{-3}$ and $p=10^{-4}$, and on the two-gross code---a bivariate bicycle code with known overheads for fault-tolerant quantum computation~\cite{yoder2025tour}---with a physical error rate of $p=10^{-4}$.
 See \cref{sec:FT_resource_estimation} for additional details.}
\label{tab:Resource_estimation_BHM_algorithm}
\end{table*}
}

We use the described architecture to implement a classical data source and a quantum program implementing the HM for graphs of size $n=4, 8, 16$, and $32$.
During initialization, the streaming source (external host) samples an instance of the HM problem. The user uploads a quantum program that repeatedly queries the external host for vertex and edge updates and then terminates with a YES, NO or NULL response. By repeating the procedure many times, we obtain statistics for the probabilities that a quantum sketch returns a correct answer, an incorrect answer, or a null outcome. 

Quantum pair sketch relies on four elementary operations, listed in \cref{sec:quantum_pair_sketch_complexity}. We compile these operations specifically for HM (\cref{sec:hm-quantum-gate-counts}) for our experiment. In our non-fault-tolerant implementation, the longest possible circuits (those that process all vertices and edges without terminating with a YES or a NO answer) consist of 195, 555, 1434, and 3513 CNOT gates for problem sizes $n=4, 8, 16,$ and $32$, respectively (See \cref{tab2:CX_counts} and \cref{sec:hm-quantum-gate-counts} for the scaling of single- and two-qubit physical gates). In Figure.~\ref{fig:experiment}A, we plot the sketch success and failure probabilities for graphs of different sizes, derived from experiments with 500, 2000, 2000, and 2000 shots for graph sizes $n=4, 8, 16$, and $32$, respectively. 
The decay of the success probability with problem size is consistent with the expected reduction in fidelity.

The condition for a quantum space advantage is that the total number of physical qubits used by the quantum algorithm is lower than the lower bound on the number of bits required by any classical algorithm. In defining space advantage solely in terms of the number of qubits, we ignore the space requirements for the classical control system used to manipulate the qubits and classical cost of decoding, among others. However, we similarly do not count the space overhead of the operating system, programming environment and other systems required to run the classical algorithm. Consequently, we consider the break-even point solely in terms of physical qubits and classical bits used by respective algorithms, excluding other overheads associated with running both quantum and classical algorithms in practice.

To compute the total space requirements, we use the sketch outcome probabilities ($p_{\rm correct}, p_{\rm wrong}$, and $p_{\rm null}$) to infer the number of sketches required to amplify the difference between $p_{\rm correct}$ and $p_{\rm wrong}$ to achieve a target algorithm success probability. In Figure.~\ref{fig:experiment}B, we plot the total number of qubits required to achieve a success probability of $2/3$. In addition, we use the quantum emulator to simulate the Quantinuum Helios device at various noise levels. Plotting in Figure.~\ref{fig:experiment}B the total space (size of sketch times number of sketches) needed at different relative noise levels, we see that at high error levels, the total space necessary increases without ever breaking even with the classical space. Indeed, even with a very generous error model where sketch fidelity only decays as $\gamma = O(1/n)$, the number of copies required to boost the algorithm success scales as $O(n)$, thereby eliminating any advantage of using a logarithmic-sized sketch. As a result, a successful realization of space advantage in the HM problem requires fault-tolerance.

\section*{Prospects for a Practical Fault-tolerant Quantum Space Advantage}

Estimating the break-even point between the classical and quantum approaches requires calculating the physical overhead of executing the algorithm fault-tolerantly while maintaining an acceptable logical error rate for each problem size. Noisy quantum sketches with fidelity $\gamma$ raise the failure probability of the algorithm from $\delta(k, \alpha)$ to $\delta'(k, \alpha, \gamma) = \delta(k, \alpha) + k (1-\gamma)$, where $k$ is the number of sketches used by the algorithm and $\alpha$ is the parameter of the HM problem. We numerically determine maximum tolerable infidelity, $1-\gamma$, that keeps the failure probability below $1/3$ (See \cref{fig:infidelity} in \cref{sec:FT_resource_estimation} for illustration at $\alpha=1/4$). We then count the number of expected logical qubits and logical quantum operations necessary to run each quantum sketch for varying problem sizes in \cref{sec:FT_resource_estimation}, and summarize the corresponding quantum resources in \cref{tab:gate_count}.

These logical gate counts yield the fidelity $F_*$ required for each logical quantum operation in order to execute the algorithm to an acceptable logical fidelity, which we set to $0.9975$, such that the overall failure probability of the noisy algorithm $\delta'(k=7, \alpha=1/4, \gamma=0.9975)$ is less than $1/3$.
The required fidelity $F_*$  in turn fixes the code distance---and hence the size and overhead---of the error-correcting code used for fault-tolerant execution.
To estimate the required \emph{physical} resources, we analyze fault-tolerant implementations based on the rotated surface code and the two-gross code, each equipped with a single CCZ magic-state factory to provide a universal gate set~\cite{gidney2025factor,Litinski2019magicstate,yoder2025tour}.
\cref{tab:Resource_estimation_BHM_algorithm} summarizes the physical resource requirements for these architectures at physical error rates of $10^{-3}$ and $10^{-4}$.

These resource estimates enable us to identify the break-even point between classical and quantum sketches.
In \cref{sec:classical_supp} and \cref{sec:quantum-supp}, we derive the number of classical bits needed to solve HM with success probability $2/$3. 
\cref{fig:experiment}C compares the classical space requirement with the physical qubit count  (for physical error rate $10^{-4}$) and shows that the quantum approach breaks even with the best classical method at $n \sim 10^{12}$, corresponding to $581$ logical qubits and $5\times10^4$ physical qubits. 
This indicates that an unconditional quantum space advantage in streaming model can be achieved with space requirements about an order of magnitude lower than those required for running Shor's factoring algorithm~\cite{gidney2025factor}.

We remark that the break-even point for the Hidden Matching problem corresponds to deep quantum circuits ($> 10^{12}$ Toffoli gates). 
For other applications of the quantum pair sketch like triangle counting and maximum directed cut, the break-even requires even greater resources. Consequently, making the quantum space advantage practical requires additional improvements to the quantum algorithms. In this work, we take the first steps towards this goal by fully compiling the quantum sketch and experimentally realizing the quantum streaming algorithm for the Hidden Matching problem.

\paragraph*{Note:} A recent work has leveraged arguments from communication complexity to demonstrate an unconditional separation between classical and quantum resources on the task of generating samples from a random circuit \cite{kretschmer2025demonstrating}. However, Ref.~\cite{kretschmer2025demonstrating} implements a narrow task that was designed specifically to realize quantum resource advantage and has no practical application. In contrast, the building blocks we demonstrate are directly applicable to natural graph problems like triangle counting~\cite{kallaugher_2022_quantum_triangle_counting} and maximum directed cut~\cite{nadya}. Furthermore, Ref.~\cite{kretschmer2025demonstrating} is analyzed in terms of one-way communication complexity, which is related to but distinct from the one-pass streaming model we consider.

\section*{Acknowledgments}
We thank Ojas Parekh and Nadezhda Voronova for helpful discussions on quantum streaming algorithms. We are grateful to Jamie Dimon, Lori Beer, and Scot Baldry for their executive support of JPMorganChase’s
Global Technology Applied Research Center and our work in quantum computing. We thank the technical staff at JPMorganChase’s Global Technology Applied Research for their support. 

\section*{Author Contributions}

P.N. and R.S. designed the experiment and led the overall project. P.N. and S.K. designed the circuits for the quantum pair sketch and the Hidden Matching algorithm, and analyzed their resource requirements. 
P.N. wrote the code implementing the quantum algorithms, ran the experiments, and analyzed the results.
P.N., S.K., S.O., and M.A.P. compiled the quantum algorithms to fault-tolerant architectures and performed resource estimation. 
P.N., S.K., S.C., and N.K. derived the lower bounds for classical algorithms. 
A.C. developed the real-time RPC concept. K.S., A.C. developed the software required for the RPC.
M.S.A, J.P.C.III, A.C., C.Foltz., A.I., L.N. developed the classical control system software. I.S.M. contributed to the design of transport waveform. 
S.F.C., R.D.D., J.M.D., B.E., C.Figgatt., J.P.G., A.H., A.A.H., C.J.K., N.K., M.M., A.R.M., A.J.P. and A.P.R developed and tuned the QPU, and maintained performance for the duration of the experiment.
A.R., J.G.B, and B.N. led the development of the QPU.
All authors contributed to technical discussions and the writing and editing of the manuscript and the appendices.

\section*{Data Availability}

The full data presented in this work is available at  \url{https://doi.org/10.5281/zenodo.17527281}.

\bibliography{refs}

\begin{thebibliography}{36}%
\makeatletter
\providecommand \@ifxundefined [1]{%
 \@ifx{#1\undefined}
}%
\providecommand \@ifnum [1]{%
 \ifnum #1\expandafter \@firstoftwo
 \else \expandafter \@secondoftwo
 \fi
}%
\providecommand \@ifx [1]{%
 \ifx #1\expandafter \@firstoftwo
 \else \expandafter \@secondoftwo
 \fi
}%
\providecommand \natexlab [1]{#1}%
\providecommand \enquote  [1]{``#1''}%
\providecommand \bibnamefont  [1]{#1}%
\providecommand \bibfnamefont [1]{#1}%
\providecommand \citenamefont [1]{#1}%
\providecommand \href@noop [0]{\@secondoftwo}%
\providecommand \href [0]{\begingroup \@sanitize@url \@href}%
\providecommand \@href[1]{\@@startlink{#1}\@@href}%
\providecommand \@@href[1]{\endgroup#1\@@endlink}%
\providecommand \@sanitize@url [0]{\catcode `\\12\catcode `\$12\catcode `\&12\catcode `\#12\catcode `\^12\catcode `\_12\catcode `\%12\relax}%
\providecommand \@@startlink[1]{}%
\providecommand \@@endlink[0]{}%
\providecommand \url  [0]{\begingroup\@sanitize@url \@url }%
\providecommand \@url [1]{\endgroup\@href {#1}{\urlprefix }}%
\providecommand \urlprefix  [0]{URL }%
\providecommand \Eprint [0]{\href }%
\providecommand \doibase [0]{https://doi.org/}%
\providecommand \selectlanguage [0]{\@gobble}%
\providecommand \bibinfo  [0]{\@secondoftwo}%
\providecommand \bibfield  [0]{\@secondoftwo}%
\providecommand \translation [1]{[#1]}%
\providecommand \BibitemOpen [0]{}%
\providecommand \bibitemStop [0]{}%
\providecommand \bibitemNoStop [0]{.\EOS\space}%
\providecommand \EOS [0]{\spacefactor3000\relax}%
\providecommand \BibitemShut  [1]{\csname bibitem#1\endcsname}%
\let\auto@bib@innerbib\@empty
\bibitem [{\citenamefont {Muthukrishnan}(2005)}]{Muthukrishnan2005}%
  \BibitemOpen
  \bibfield  {author} {\bibinfo {author} {\bibfnamefont {S.}~\bibnamefont {Muthukrishnan}},\ }\bibfield  {title} {\bibinfo {title} {Data streams: Algorithms and applications},\ }\href {https://doi.org/10.1561/0400000002} {\bibfield  {journal} {\bibinfo  {journal} {Foundations and Trends{\textregistered} in Theoretical Computer Science}\ }\textbf {\bibinfo {volume} {1}},\ \bibinfo {pages} {117–236} (\bibinfo {year} {2005})}\BibitemShut {NoStop}%
\bibitem [{\citenamefont {Kallaugher}\ \emph {et~al.}(2025)\citenamefont {Kallaugher}, \citenamefont {Parekh},\ and\ \citenamefont {Voronova}}]{kallaugher2024howto}%
  \BibitemOpen
  \bibfield  {author} {\bibinfo {author} {\bibfnamefont {J.}~\bibnamefont {Kallaugher}}, \bibinfo {author} {\bibfnamefont {O.}~\bibnamefont {Parekh}},\ and\ \bibinfo {author} {\bibfnamefont {N.}~\bibnamefont {Voronova}},\ }\bibfield  {title} {\bibinfo {title} {How to design a quantum streaming algorithm without knowing anything about quantum computing},\ }in\ \href {https://doi.org/10.1137/1.9781611978315.2} {\emph {\bibinfo {booktitle} {2025 Symposium on Simplicity in Algorithms (SOSA)}}}\ (\bibinfo {year} {2025})\ pp.\ \bibinfo {pages} {9--45}\BibitemShut {NoStop}%
\bibitem [{\citenamefont {Kallaugher}\ \emph {et~al.}(2023)\citenamefont {Kallaugher}, \citenamefont {Parekh},\ and\ \citenamefont {Voronova}}]{nadya}%
  \BibitemOpen
  \bibfield  {author} {\bibinfo {author} {\bibfnamefont {J.}~\bibnamefont {Kallaugher}}, \bibinfo {author} {\bibfnamefont {O.}~\bibnamefont {Parekh}},\ and\ \bibinfo {author} {\bibfnamefont {N.}~\bibnamefont {Voronova}},\ }\href@noop {} {\bibinfo {title} {Exponential quantum space advantage for approximating maximum directed cut in the streaming model}} (\bibinfo {year} {2023}),\ \Eprint {https://arxiv.org/abs/2311.14123} {arXiv:2311.14123 [quant-ph]} \BibitemShut {NoStop}%
\bibitem [{\citenamefont {Kallaugher}(2022)}]{kallaugher_2022_quantum_triangle_counting}%
  \BibitemOpen
  \bibfield  {author} {\bibinfo {author} {\bibfnamefont {J.}~\bibnamefont {Kallaugher}},\ }\bibfield  {title} {\bibinfo {title} {A quantum advantage for a natural streaming problem},\ }in\ \href {https://doi.org/10.1109/FOCS52979.2021.00091} {\emph {\bibinfo {booktitle} {2021 IEEE 62nd Annual Symposium on Foundations of Computer Science (FOCS)}}}\ (\bibinfo {year} {2022})\ pp.\ \bibinfo {pages} {897--908}\BibitemShut {NoStop}%
\bibitem [{\citenamefont {Gavinsky}\ \emph {et~al.}(2006)\citenamefont {Gavinsky}, \citenamefont {Kempe},\ and\ \citenamefont {de~Wolf}}]{BHM1}%
  \BibitemOpen
  \bibfield  {author} {\bibinfo {author} {\bibfnamefont {D.}~\bibnamefont {Gavinsky}}, \bibinfo {author} {\bibfnamefont {J.}~\bibnamefont {Kempe}},\ and\ \bibinfo {author} {\bibfnamefont {R.}~\bibnamefont {de~Wolf}},\ }\href@noop {} {\bibinfo {title} {Exponential separation of quantum and classical one-way communication complexity for a boolean function}} (\bibinfo {year} {2006}),\ \Eprint {https://arxiv.org/abs/quant-ph/0607174} {arXiv:quant-ph/0607174 [quant-ph]} \BibitemShut {NoStop}%
\bibitem [{\citenamefont {Kerenidis}\ and\ \citenamefont {Raz}(2006)}]{BHM2}%
  \BibitemOpen
  \bibfield  {author} {\bibinfo {author} {\bibfnamefont {I.}~\bibnamefont {Kerenidis}}\ and\ \bibinfo {author} {\bibfnamefont {R.}~\bibnamefont {Raz}},\ }\href@noop {} {\bibinfo {title} {The one-way communication complexity of the boolean hidden matching problem}} (\bibinfo {year} {2006}),\ \Eprint {https://arxiv.org/abs/quant-ph/0607173} {arXiv:quant-ph/0607173} \BibitemShut {NoStop}%
\bibitem [{\citenamefont {Gavinsky}\ \emph {et~al.}(2007)\citenamefont {Gavinsky}, \citenamefont {Kempe}, \citenamefont {Kerenidis}, \citenamefont {Raz},\ and\ \citenamefont {de~Wolf}}]{BHM3}%
  \BibitemOpen
  \bibfield  {author} {\bibinfo {author} {\bibfnamefont {D.}~\bibnamefont {Gavinsky}}, \bibinfo {author} {\bibfnamefont {J.}~\bibnamefont {Kempe}}, \bibinfo {author} {\bibfnamefont {I.}~\bibnamefont {Kerenidis}}, \bibinfo {author} {\bibfnamefont {R.}~\bibnamefont {Raz}},\ and\ \bibinfo {author} {\bibfnamefont {R.}~\bibnamefont {de~Wolf}},\ }\bibfield  {title} {\bibinfo {title} {Exponential separations for one-way quantum communication complexity, with applications to cryptography},\ }in\ \href {https://doi.org/10.1145/1250790.1250866} {\emph {\bibinfo {booktitle} {Proceedings of the Thirty-Ninth Annual ACM Symposium on Theory of Computing}}},\ \bibinfo {series and number} {STOC '07}\ (\bibinfo  {publisher} {Association for Computing Machinery},\ \bibinfo {address} {New York, NY, USA},\ \bibinfo {year} {2007})\ p.\ \bibinfo {pages} {516–525}\BibitemShut {NoStop}%
\bibitem [{\citenamefont {Montanaro}(2016)}]{10.5555/3179430.3179435}%
  \BibitemOpen
  \bibfield  {author} {\bibinfo {author} {\bibfnamefont {A.}~\bibnamefont {Montanaro}},\ }\bibfield  {title} {\bibinfo {title} {The quantum complexity of approximating the frequency moments},\ }\href@noop {} {\bibfield  {journal} {\bibinfo  {journal} {Quantum Info. Comput.}\ }\textbf {\bibinfo {volume} {16}},\ \bibinfo {pages} {1169–1190} (\bibinfo {year} {2016})}\BibitemShut {NoStop}%
\bibitem [{\citenamefont {Nayak}\ and\ \citenamefont {Touchette}(2017)}]{10.4230/lipics.ccc.2017.23}%
  \BibitemOpen
  \bibfield  {author} {\bibinfo {author} {\bibfnamefont {A.}~\bibnamefont {Nayak}}\ and\ \bibinfo {author} {\bibfnamefont {D.}~\bibnamefont {Touchette}},\ }\bibfield  {title} {\bibinfo {title} {Augmented index and quantum streaming algorithms for dyck(2)}\ }(\bibinfo  {publisher} {Schloss Dagstuhl – Leibniz-Zentrum f\"{u}r Informatik},\ \bibinfo {year} {2017})\BibitemShut {NoStop}%
\bibitem [{\citenamefont {Watrous}(1999)}]{watrous1999space}%
  \BibitemOpen
  \bibfield  {author} {\bibinfo {author} {\bibfnamefont {J.}~\bibnamefont {Watrous}},\ }\bibfield  {title} {\bibinfo {title} {Space-bounded quantum complexity},\ }\href {https://doi.org/https://doi.org/10.1006/jcss.1999.1655} {\bibfield  {journal} {\bibinfo  {journal} {Journal of Computer and System Sciences}\ }\textbf {\bibinfo {volume} {59}},\ \bibinfo {pages} {281} (\bibinfo {year} {1999})}\BibitemShut {NoStop}%
\bibitem [{goo(2025)}]{google2025quantum}%
  \BibitemOpen
  \bibfield  {title} {\bibinfo {title} {Quantum error correction below the surface code threshold},\ }\href {https://doi.org/https://doi.org/10.1038/s41586-024-08449-y} {\bibfield  {journal} {\bibinfo  {journal} {Nature}\ }\textbf {\bibinfo {volume} {638}},\ \bibinfo {pages} {920} (\bibinfo {year} {2025})}\BibitemShut {NoStop}%
\bibitem [{\citenamefont {Daguerre}\ \emph {et~al.}(2025)\citenamefont {Daguerre}, \citenamefont {Blume-Kohout}, \citenamefont {Brown}, \citenamefont {Hayes},\ and\ \citenamefont {Kim}}]{Daguerre2025}%
  \BibitemOpen
  \bibfield  {author} {\bibinfo {author} {\bibfnamefont {L.}~\bibnamefont {Daguerre}}, \bibinfo {author} {\bibfnamefont {R.}~\bibnamefont {Blume-Kohout}}, \bibinfo {author} {\bibfnamefont {N.~C.}\ \bibnamefont {Brown}}, \bibinfo {author} {\bibfnamefont {D.}~\bibnamefont {Hayes}},\ and\ \bibinfo {author} {\bibfnamefont {I.~H.}\ \bibnamefont {Kim}},\ }\bibfield  {title} {\bibinfo {title} {Experimental demonstration of high-fidelity logical magic states from code switching},\ }\bibfield  {journal} {\bibinfo  {journal} {Physical Review X}\ }\textbf {\bibinfo {volume} {15}},\ \href {https://doi.org/10.1103/dck4-x9c2} {10.1103/dck4-x9c2} (\bibinfo {year} {2025})\BibitemShut {NoStop}%
\bibitem [{\citenamefont {Córcoles}\ \emph {et~al.}(2021)\citenamefont {Córcoles}, \citenamefont {Takita}, \citenamefont {Inoue}, \citenamefont {Lekuch}, \citenamefont {Minev}, \citenamefont {Chow},\ and\ \citenamefont {Gambetta}}]{Crcoles2021}%
  \BibitemOpen
  \bibfield  {author} {\bibinfo {author} {\bibfnamefont {A.}~\bibnamefont {Córcoles}}, \bibinfo {author} {\bibfnamefont {M.}~\bibnamefont {Takita}}, \bibinfo {author} {\bibfnamefont {K.}~\bibnamefont {Inoue}}, \bibinfo {author} {\bibfnamefont {S.}~\bibnamefont {Lekuch}}, \bibinfo {author} {\bibfnamefont {Z.~K.}\ \bibnamefont {Minev}}, \bibinfo {author} {\bibfnamefont {J.~M.}\ \bibnamefont {Chow}},\ and\ \bibinfo {author} {\bibfnamefont {J.~M.}\ \bibnamefont {Gambetta}},\ }\bibfield  {title} {\bibinfo {title} {Exploiting dynamic quantum circuits in a quantum algorithm with superconducting qubits},\ }\bibfield  {journal} {\bibinfo  {journal} {Physical Review Letters}\ }\textbf {\bibinfo {volume} {127}},\ \href {https://doi.org/10.1103/physrevlett.127.100501} {10.1103/physrevlett.127.100501} (\bibinfo {year} {2021})\BibitemShut {NoStop}%
\bibitem [{\citenamefont {Foss-Feig}\ \emph {et~al.}(2023)\citenamefont {Foss-Feig}, \citenamefont {Tikku}, \citenamefont {Lu}, \citenamefont {Mayer}, \citenamefont {Iqbal}, \citenamefont {Gatterman}, \citenamefont {Gerber}, \citenamefont {Gilmore}, \citenamefont {Gresh}, \citenamefont {Hankin}, \citenamefont {Hewitt}, \citenamefont {Horst}, \citenamefont {Matheny}, \citenamefont {Mengle}, \citenamefont {Neyenhuis}, \citenamefont {Dreyer}, \citenamefont {Hayes}, \citenamefont {Hsieh},\ and\ \citenamefont {Kim}}]{2302.03029}%
  \BibitemOpen
  \bibfield  {author} {\bibinfo {author} {\bibfnamefont {M.}~\bibnamefont {Foss-Feig}}, \bibinfo {author} {\bibfnamefont {A.}~\bibnamefont {Tikku}}, \bibinfo {author} {\bibfnamefont {T.-C.}\ \bibnamefont {Lu}}, \bibinfo {author} {\bibfnamefont {K.}~\bibnamefont {Mayer}}, \bibinfo {author} {\bibfnamefont {M.}~\bibnamefont {Iqbal}}, \bibinfo {author} {\bibfnamefont {T.~M.}\ \bibnamefont {Gatterman}}, \bibinfo {author} {\bibfnamefont {J.~A.}\ \bibnamefont {Gerber}}, \bibinfo {author} {\bibfnamefont {K.}~\bibnamefont {Gilmore}}, \bibinfo {author} {\bibfnamefont {D.}~\bibnamefont {Gresh}}, \bibinfo {author} {\bibfnamefont {A.}~\bibnamefont {Hankin}}, \bibinfo {author} {\bibfnamefont {N.}~\bibnamefont {Hewitt}}, \bibinfo {author} {\bibfnamefont {C.~V.}\ \bibnamefont {Horst}}, \bibinfo {author} {\bibfnamefont {M.}~\bibnamefont {Matheny}}, \bibinfo {author} {\bibfnamefont {T.}~\bibnamefont {Mengle}}, \bibinfo {author} {\bibfnamefont {B.}~\bibnamefont {Neyenhuis}}, \bibinfo {author} {\bibfnamefont {H.}~\bibnamefont
  {Dreyer}}, \bibinfo {author} {\bibfnamefont {D.}~\bibnamefont {Hayes}}, \bibinfo {author} {\bibfnamefont {T.~H.}\ \bibnamefont {Hsieh}},\ and\ \bibinfo {author} {\bibfnamefont {I.~H.}\ \bibnamefont {Kim}},\ }\href@noop {} {\bibinfo {title} {Experimental demonstration of the advantage of adaptive quantum circuits}} (\bibinfo {year} {2023}),\ \Eprint {https://arxiv.org/abs/2302.03029} {arXiv:2302.03029 [quant-ph]} \BibitemShut {NoStop}%
\bibitem [{\citenamefont {B\"{a}umer}\ \emph {et~al.}(2024)\citenamefont {B\"{a}umer}, \citenamefont {Tripathi}, \citenamefont {Wang}, \citenamefont {Rall}, \citenamefont {Chen}, \citenamefont {Majumder}, \citenamefont {Seif},\ and\ \citenamefont {Minev}}]{Bumer2024}%
  \BibitemOpen
  \bibfield  {author} {\bibinfo {author} {\bibfnamefont {E.}~\bibnamefont {B\"{a}umer}}, \bibinfo {author} {\bibfnamefont {V.}~\bibnamefont {Tripathi}}, \bibinfo {author} {\bibfnamefont {D.~S.}\ \bibnamefont {Wang}}, \bibinfo {author} {\bibfnamefont {P.}~\bibnamefont {Rall}}, \bibinfo {author} {\bibfnamefont {E.~H.}\ \bibnamefont {Chen}}, \bibinfo {author} {\bibfnamefont {S.}~\bibnamefont {Majumder}}, \bibinfo {author} {\bibfnamefont {A.}~\bibnamefont {Seif}},\ and\ \bibinfo {author} {\bibfnamefont {Z.~K.}\ \bibnamefont {Minev}},\ }\bibfield  {title} {\bibinfo {title} {Efficient long-range entanglement using dynamic circuits},\ }\bibfield  {journal} {\bibinfo  {journal} {PRX Quantum}\ }\textbf {\bibinfo {volume} {5}},\ \href {https://doi.org/10.1103/prxquantum.5.030339} {10.1103/prxquantum.5.030339} (\bibinfo {year} {2024})\BibitemShut {NoStop}%
\bibitem [{\citenamefont {Carrera~Vazquez}\ \emph {et~al.}(2024)\citenamefont {Carrera~Vazquez}, \citenamefont {Tornow}, \citenamefont {Ristè}, \citenamefont {Woerner}, \citenamefont {Takita},\ and\ \citenamefont {Egger}}]{CarreraVazquez2024}%
  \BibitemOpen
  \bibfield  {author} {\bibinfo {author} {\bibfnamefont {A.}~\bibnamefont {Carrera~Vazquez}}, \bibinfo {author} {\bibfnamefont {C.}~\bibnamefont {Tornow}}, \bibinfo {author} {\bibfnamefont {D.}~\bibnamefont {Ristè}}, \bibinfo {author} {\bibfnamefont {S.}~\bibnamefont {Woerner}}, \bibinfo {author} {\bibfnamefont {M.}~\bibnamefont {Takita}},\ and\ \bibinfo {author} {\bibfnamefont {D.~J.}\ \bibnamefont {Egger}},\ }\bibfield  {title} {\bibinfo {title} {Combining quantum processors with real-time classical communication},\ }\href {https://doi.org/10.1038/s41586-024-08178-2} {\bibfield  {journal} {\bibinfo  {journal} {Nature}\ }\textbf {\bibinfo {volume} {636}},\ \bibinfo {pages} {75–79} (\bibinfo {year} {2024})}\BibitemShut {NoStop}%
\bibitem [{\citenamefont {Koch}\ \emph {et~al.}(2025)\citenamefont {Koch}, \citenamefont {Lawrence}, \citenamefont {Singhal}, \citenamefont {Sivarajah},\ and\ \citenamefont {Duncan}}]{koch2025guppy}%
  \BibitemOpen
  \bibfield  {author} {\bibinfo {author} {\bibfnamefont {M.}~\bibnamefont {Koch}}, \bibinfo {author} {\bibfnamefont {A.}~\bibnamefont {Lawrence}}, \bibinfo {author} {\bibfnamefont {K.}~\bibnamefont {Singhal}}, \bibinfo {author} {\bibfnamefont {S.}~\bibnamefont {Sivarajah}},\ and\ \bibinfo {author} {\bibfnamefont {R.}~\bibnamefont {Duncan}},\ }\href@noop {} {\bibinfo {title} {Guppy: pythonic quantum-classical programming}} (\bibinfo {year} {2025}),\ \Eprint {https://arxiv.org/abs/2510.12582} {arXiv:2510.12582 [quant-ph]} \BibitemShut {NoStop}%
\bibitem [{\citenamefont {Liu}\ \emph {et~al.}(2025)\citenamefont {Liu} \emph {et~al.}}]{jpmccertrand2}%
  \BibitemOpen
  \bibfield  {author} {\bibinfo {author} {\bibfnamefont {M.}~\bibnamefont {Liu}} \emph {et~al.},\ }\href@noop {} {\bibinfo {title} {Certified randomness amplification by dynamically probing remote random quantum states}},\ \bibinfo {howpublished} {To appear} (\bibinfo {year} {2025})\BibitemShut {NoStop}%
\bibitem [{\citenamefont {Kumar}(2019)}]{kumar2019practically}%
  \BibitemOpen
  \bibfield  {author} {\bibinfo {author} {\bibfnamefont {N.}~\bibnamefont {Kumar}},\ }\bibfield  {title} {\bibinfo {title} {Practically feasible robust quantum money with classical verification},\ }\bibfield  {journal} {\bibinfo  {journal} {Cryptography}\ }\textbf {\bibinfo {volume} {3}},\ \href {https://doi.org/10.3390/cryptography3040026} {10.3390/cryptography3040026} (\bibinfo {year} {2019})\BibitemShut {NoStop}%
\bibitem [{\citenamefont {Bar-Yossef}\ \emph {et~al.}(2004)\citenamefont {Bar-Yossef}, \citenamefont {Jayram},\ and\ \citenamefont {Kerenidis}}]{bar2004exponential}%
  \BibitemOpen
  \bibfield  {author} {\bibinfo {author} {\bibfnamefont {Z.}~\bibnamefont {Bar-Yossef}}, \bibinfo {author} {\bibfnamefont {T.~S.}\ \bibnamefont {Jayram}},\ and\ \bibinfo {author} {\bibfnamefont {I.}~\bibnamefont {Kerenidis}},\ }\bibfield  {title} {\bibinfo {title} {Exponential separation of quantum and classical one-way communication complexity},\ }in\ \href {https://doi.org/10.1145/1007352.1007379} {\emph {\bibinfo {booktitle} {Proceedings of the Thirty-Sixth Annual ACM Symposium on Theory of Computing}}},\ \bibinfo {series and number} {STOC '04}\ (\bibinfo  {publisher} {Association for Computing Machinery},\ \bibinfo {address} {New York, NY, USA},\ \bibinfo {year} {2004})\ p.\ \bibinfo {pages} {128–137}\BibitemShut {NoStop}%
\bibitem [{\citenamefont {Cross}\ \emph {et~al.}(2022)\citenamefont {Cross}, \citenamefont {Javadi-Abhari}, \citenamefont {Alexander}, \citenamefont {De~Beaudrap}, \citenamefont {Bishop}, \citenamefont {Heidel}, \citenamefont {Ryan}, \citenamefont {Sivarajah}, \citenamefont {Smolin}, \citenamefont {Gambetta},\ and\ \citenamefont {Johnson}}]{cross2022openqasm}%
  \BibitemOpen
  \bibfield  {author} {\bibinfo {author} {\bibfnamefont {A.}~\bibnamefont {Cross}}, \bibinfo {author} {\bibfnamefont {A.}~\bibnamefont {Javadi-Abhari}}, \bibinfo {author} {\bibfnamefont {T.}~\bibnamefont {Alexander}}, \bibinfo {author} {\bibfnamefont {N.}~\bibnamefont {De~Beaudrap}}, \bibinfo {author} {\bibfnamefont {L.~S.}\ \bibnamefont {Bishop}}, \bibinfo {author} {\bibfnamefont {S.}~\bibnamefont {Heidel}}, \bibinfo {author} {\bibfnamefont {C.~A.}\ \bibnamefont {Ryan}}, \bibinfo {author} {\bibfnamefont {P.}~\bibnamefont {Sivarajah}}, \bibinfo {author} {\bibfnamefont {J.}~\bibnamefont {Smolin}}, \bibinfo {author} {\bibfnamefont {J.~M.}\ \bibnamefont {Gambetta}},\ and\ \bibinfo {author} {\bibfnamefont {B.~R.}\ \bibnamefont {Johnson}},\ }\bibfield  {title} {\bibinfo {title} {Openqasm 3: A broader and deeper quantum assembly language},\ }\bibfield  {journal} {\bibinfo  {journal} {ACM Transactions on Quantum Computing}\ }\textbf {\bibinfo {volume} {3}},\ \href {https://doi.org/10.1145/3505636} {10.1145/3505636}
  (\bibinfo {year} {2022})\BibitemShut {NoStop}%
\bibitem [{\citenamefont {Ransford}\ \emph {et~al.}(2025)\citenamefont {Ransford} \emph {et~al.}}]{helios}%
  \BibitemOpen
  \bibfield  {author} {\bibinfo {author} {\bibfnamefont {A.}~\bibnamefont {Ransford}} \emph {et~al.},\ }\href@noop {} {\bibinfo {title} {Helios: A 98-qubit trapped-ion quantum computer}},\ \bibinfo {howpublished} {To appear} (\bibinfo {year} {2025})\BibitemShut {NoStop}%
\bibitem [{\citenamefont {Yoder}\ \emph {et~al.}(2025)\citenamefont {Yoder}, \citenamefont {Schoute}, \citenamefont {Rall}, \citenamefont {Pritchett}, \citenamefont {Gambetta}, \citenamefont {Cross}, \citenamefont {Carroll},\ and\ \citenamefont {Beverland}}]{yoder2025tour}%
  \BibitemOpen
  \bibfield  {author} {\bibinfo {author} {\bibfnamefont {T.~J.}\ \bibnamefont {Yoder}}, \bibinfo {author} {\bibfnamefont {E.}~\bibnamefont {Schoute}}, \bibinfo {author} {\bibfnamefont {P.}~\bibnamefont {Rall}}, \bibinfo {author} {\bibfnamefont {E.}~\bibnamefont {Pritchett}}, \bibinfo {author} {\bibfnamefont {J.~M.}\ \bibnamefont {Gambetta}}, \bibinfo {author} {\bibfnamefont {A.~W.}\ \bibnamefont {Cross}}, \bibinfo {author} {\bibfnamefont {M.}~\bibnamefont {Carroll}},\ and\ \bibinfo {author} {\bibfnamefont {M.~E.}\ \bibnamefont {Beverland}},\ }\href@noop {} {\bibinfo {title} {Tour de gross: A modular quantum computer based on bivariate bicycle codes}} (\bibinfo {year} {2025}),\ \Eprint {https://arxiv.org/abs/2506.03094} {arXiv:2506.03094 [quant-ph]} \BibitemShut {NoStop}%
\bibitem [{\citenamefont {Gidney}(2025)}]{gidney2025factor}%
  \BibitemOpen
  \bibfield  {author} {\bibinfo {author} {\bibfnamefont {C.}~\bibnamefont {Gidney}},\ }\href@noop {} {\bibinfo {title} {How to factor 2048 bit rsa integers with less than a million noisy qubits}} (\bibinfo {year} {2025}),\ \Eprint {https://arxiv.org/abs/2505.15917} {arXiv:2505.15917 [quant-ph]} \BibitemShut {NoStop}%
\bibitem [{\citenamefont {Litinski}(2019)}]{Litinski2019magicstate}%
  \BibitemOpen
  \bibfield  {author} {\bibinfo {author} {\bibfnamefont {D.}~\bibnamefont {Litinski}},\ }\bibfield  {title} {\bibinfo {title} {Magic {S}tate {D}istillation: {N}ot as {C}ostly as {Y}ou {T}hink},\ }\href {https://doi.org/10.22331/q-2019-12-02-205} {\bibfield  {journal} {\bibinfo  {journal} {{Quantum}}\ }\textbf {\bibinfo {volume} {3}},\ \bibinfo {pages} {205} (\bibinfo {year} {2019})}\BibitemShut {NoStop}%
\bibitem [{\citenamefont {Kretschmer}\ \emph {et~al.}(2025)\citenamefont {Kretschmer}, \citenamefont {Grewal}, \citenamefont {DeCross}, \citenamefont {Gerber}, \citenamefont {Gilmore}, \citenamefont {Gresh}, \citenamefont {Hunter-Jones}, \citenamefont {Mayer}, \citenamefont {Neyenhuis}, \citenamefont {Hayes} \emph {et~al.}}]{kretschmer2025demonstrating}%
  \BibitemOpen
  \bibfield  {author} {\bibinfo {author} {\bibfnamefont {W.}~\bibnamefont {Kretschmer}}, \bibinfo {author} {\bibfnamefont {S.}~\bibnamefont {Grewal}}, \bibinfo {author} {\bibfnamefont {M.}~\bibnamefont {DeCross}}, \bibinfo {author} {\bibfnamefont {J.~A.}\ \bibnamefont {Gerber}}, \bibinfo {author} {\bibfnamefont {K.}~\bibnamefont {Gilmore}}, \bibinfo {author} {\bibfnamefont {D.}~\bibnamefont {Gresh}}, \bibinfo {author} {\bibfnamefont {N.}~\bibnamefont {Hunter-Jones}}, \bibinfo {author} {\bibfnamefont {K.}~\bibnamefont {Mayer}}, \bibinfo {author} {\bibfnamefont {B.}~\bibnamefont {Neyenhuis}}, \bibinfo {author} {\bibfnamefont {D.}~\bibnamefont {Hayes}}, \emph {et~al.},\ }\href@noop {} {\bibinfo {title} {Demonstrating an unconditional separation between quantum and classical information resources}} (\bibinfo {year} {2025}),\ \Eprint {https://arxiv.org/abs/2509.07255} {arXiv:2509.07255 [quant-ph]} \BibitemShut {NoStop}%
\bibitem [{\citenamefont {Newman}(1991)}]{newman}%
  \BibitemOpen
  \bibfield  {author} {\bibinfo {author} {\bibfnamefont {I.}~\bibnamefont {Newman}},\ }\bibfield  {title} {\bibinfo {title} {Private vs. common random bits in communication complexity},\ }\href {https://doi.org/https://doi.org/10.1016/0020-0190(91)90157-D} {\bibfield  {journal} {\bibinfo  {journal} {Information Processing Letters}\ }\textbf {\bibinfo {volume} {39}},\ \bibinfo {pages} {67} (\bibinfo {year} {1991})}\BibitemShut {NoStop}%
\bibitem [{\citenamefont {Arora}\ and\ \citenamefont {Barak}(2009)}]{arora2009computational}%
  \BibitemOpen
  \bibfield  {author} {\bibinfo {author} {\bibfnamefont {S.}~\bibnamefont {Arora}}\ and\ \bibinfo {author} {\bibfnamefont {B.}~\bibnamefont {Barak}},\ }\href {https://doi.org/https://doi.org/10.1017/CBO9780511804090} {\emph {\bibinfo {title} {Computational complexity: a modern approach}}}\ (\bibinfo  {publisher} {Cambridge University Press},\ \bibinfo {year} {2009})\BibitemShut {NoStop}%
\bibitem [{\citenamefont {Bonami}(1970)}]{bonami1970etude}%
  \BibitemOpen
  \bibfield  {author} {\bibinfo {author} {\bibfnamefont {A.}~\bibnamefont {Bonami}},\ }\bibfield  {title} {\bibinfo {title} {{\'E}tude des coefficients de fourier des fonctions de $l^{p}(g)$},\ }in\ \href {https://doi.org/http://eudml.org/doc/74019} {\emph {\bibinfo {booktitle} {Annales de l'institut Fourier}}},\ Vol.~\bibinfo {volume} {20}\ (\bibinfo {year} {1970})\ pp.\ \bibinfo {pages} {335--402}\BibitemShut {NoStop}%
\bibitem [{\citenamefont {Beckner}(1975)}]{beckner1975inequalities}%
  \BibitemOpen
  \bibfield  {author} {\bibinfo {author} {\bibfnamefont {W.}~\bibnamefont {Beckner}},\ }\bibfield  {title} {\bibinfo {title} {Inequalities in fourier analysis},\ }\href {http://www.jstor.org/stable/1970980} {\bibfield  {journal} {\bibinfo  {journal} {Annals of Mathematics}\ }\textbf {\bibinfo {volume} {102}},\ \bibinfo {pages} {159} (\bibinfo {year} {1975})}\BibitemShut {NoStop}%
\bibitem [{\citenamefont {Kahn}\ \emph {et~al.}(1988)\citenamefont {Kahn}, \citenamefont {Kalai},\ and\ \citenamefont {Linial}}]{kahn1988influence}%
  \BibitemOpen
  \bibfield  {author} {\bibinfo {author} {\bibfnamefont {J.~D.}\ \bibnamefont {Kahn}}, \bibinfo {author} {\bibfnamefont {G.}~\bibnamefont {Kalai}},\ and\ \bibinfo {author} {\bibfnamefont {N.}~\bibnamefont {Linial}},\ }\bibfield  {title} {\bibinfo {title} {The influence of variables on boolean functions},\ }\href {https://api.semanticscholar.org/CorpusID:2347606} {\bibfield  {journal} {\bibinfo  {journal} {[Proceedings 1988] 29th Annual Symposium on Foundations of Computer Science}\ ,\ \bibinfo {pages} {68}} (\bibinfo {year} {1988})}\BibitemShut {NoStop}%
\bibitem [{\citenamefont {Shukla}\ and\ \citenamefont {Vedula}(2024)}]{shukla2023efficient}%
  \BibitemOpen
  \bibfield  {author} {\bibinfo {author} {\bibfnamefont {A.}~\bibnamefont {Shukla}}\ and\ \bibinfo {author} {\bibfnamefont {P.}~\bibnamefont {Vedula}},\ }\bibfield  {title} {\bibinfo {title} {An efficient quantum algorithm for preparation of uniform quantum superposition states},\ }\bibfield  {journal} {\bibinfo  {journal} {Quantum Information Processing}\ }\textbf {\bibinfo {volume} {23}},\ \href {https://doi.org/10.1007/s11128-024-04258-4} {10.1007/s11128-024-04258-4} (\bibinfo {year} {2024})\BibitemShut {NoStop}%
\bibitem [{\citenamefont {Herbert}\ \emph {et~al.}(2024)\citenamefont {Herbert}, \citenamefont {Sorci},\ and\ \citenamefont {Tang}}]{herbert2023almostoptimal}%
  \BibitemOpen
  \bibfield  {author} {\bibinfo {author} {\bibfnamefont {S.}~\bibnamefont {Herbert}}, \bibinfo {author} {\bibfnamefont {J.}~\bibnamefont {Sorci}},\ and\ \bibinfo {author} {\bibfnamefont {Y.}~\bibnamefont {Tang}},\ }\bibfield  {title} {\bibinfo {title} {Almost-optimal computational-basis-state transpositions},\ }\href {https://doi.org/10.1103/PhysRevA.110.012437} {\bibfield  {journal} {\bibinfo  {journal} {Phys. Rev. A}\ }\textbf {\bibinfo {volume} {110}},\ \bibinfo {pages} {012437} (\bibinfo {year} {2024})}\BibitemShut {NoStop}%
\bibitem [{\citenamefont {Maslov}(2016)}]{Maslov_2016}%
  \BibitemOpen
  \bibfield  {author} {\bibinfo {author} {\bibfnamefont {D.}~\bibnamefont {Maslov}},\ }\bibfield  {title} {\bibinfo {title} {Advantages of using relative-phase toffoli gates with an application to multiple control toffoli optimization},\ }\bibfield  {journal} {\bibinfo  {journal} {Physical Review A}\ }\textbf {\bibinfo {volume} {93}},\ \href {https://doi.org/10.1103/physreva.93.022311} {10.1103/physreva.93.022311} (\bibinfo {year} {2016})\BibitemShut {NoStop}%
\bibitem [{\citenamefont {Bravyi}\ \emph {et~al.}(2024)\citenamefont {Bravyi}, \citenamefont {Cross}, \citenamefont {Gambetta}, \citenamefont {Maslov}, \citenamefont {Rall},\ and\ \citenamefont {Yoder}}]{Bravyi2024}%
  \BibitemOpen
  \bibfield  {author} {\bibinfo {author} {\bibfnamefont {S.}~\bibnamefont {Bravyi}}, \bibinfo {author} {\bibfnamefont {A.~W.}\ \bibnamefont {Cross}}, \bibinfo {author} {\bibfnamefont {J.~M.}\ \bibnamefont {Gambetta}}, \bibinfo {author} {\bibfnamefont {D.}~\bibnamefont {Maslov}}, \bibinfo {author} {\bibfnamefont {P.}~\bibnamefont {Rall}},\ and\ \bibinfo {author} {\bibfnamefont {T.~J.}\ \bibnamefont {Yoder}},\ }\bibfield  {title} {\bibinfo {title} {High-threshold and low-overhead fault-tolerant quantum memory},\ }\href {https://doi.org/10.1038/s41586-024-07107-7} {\bibfield  {journal} {\bibinfo  {journal} {Nature}\ }\textbf {\bibinfo {volume} {627}},\ \bibinfo {pages} {778–782} (\bibinfo {year} {2024})}\BibitemShut {NoStop}%
\bibitem [{\citenamefont {Gidney}\ \emph {et~al.}(2024)\citenamefont {Gidney}, \citenamefont {Shutty},\ and\ \citenamefont {Jones}}]{gidney2024magic}%
  \BibitemOpen
  \bibfield  {author} {\bibinfo {author} {\bibfnamefont {C.}~\bibnamefont {Gidney}}, \bibinfo {author} {\bibfnamefont {N.}~\bibnamefont {Shutty}},\ and\ \bibinfo {author} {\bibfnamefont {C.}~\bibnamefont {Jones}},\ }\href@noop {} {\bibinfo {title} {Magic state cultivation: growing t states as cheap as cnot gates}} (\bibinfo {year} {2024}),\ \Eprint {https://arxiv.org/abs/2409.17595} {arXiv:2409.17595 [quant-ph]} \BibitemShut {NoStop}%
\end{thebibliography}%

\section*{Disclaimer}
This paper was prepared for informational purposes with contributions from the Global Technology Applied Research center of JPMorgan Chase \& Co. This paper is not a product of the Research Department of JPMorgan Chase \& Co. or its affiliates. Neither JPMorgan Chase \& Co. nor any of its affiliates makes any explicit or implied representation or warranty and none of them accept any liability in connection with this paper, including, without limitation, with respect to the completeness, accuracy, or reliability of the information contained herein and the potential legal, compliance, tax, or accounting effects thereof. This document is not intended as investment research or investment advice, or as a recommendation, offer, or solicitation for the purchase or sale of any security, financial instrument, financial product or service, or to be used in any way for evaluating the merits of participating in any transaction.

\appendix

\clearpage
\section{Classical Streaming Algorithm for Hidden Matching Problem}
\label{sec:classical_supp}
On an instance of $n$ vertices and matching parameter $\alpha$, \citet{BHM3} give a $\Theta(\sqrt{n/\alpha})$ lower bound for the one-way communication complexity of HM. Through a reduction of a communication problem to a streaming problem, this also gives an asymptotic lower bound for the space necessary to classically solve HM considered in this paper. Furthermore, in \cref{sec:classical_supp_lower_bound}, we give a stronger, exact version of this lower bound (Theorem~\ref{thm:phm-lower}), that determines the constant factors explicitly. %

We begin by first analyzing a simple algorithm and show that it has the same asymptotics as this bound. This asymptotically-optimal algorithm serves as a basis of comparison against the quantum sketch in \cref{fig:experiment}C and in \cref{tab:Resource_estimation_BHM_algorithm}. This analysis will also motivate the rigorous lower bounds below. Just as in the main text, we assume that the stream is ordered such that we receive vertex updates first and then edges. The algorithm fixes parameter $k \leq n$ representing the desired sketch size, and proceeds as follows:
\begin{enumerate}
    \item Randomly sample $k$ vertices and store them with their labels in the sketch $S$. 
    \item As we receive edge update $((u,v), w_{uv})$, check if $\{u,v\} \subset S$. If yes, then check $x_u \oplus x_v = w_{uv}$ and return the answer.
    \item If we process all edges and never find an edge contained in $S$, then return YES or NO with equal probability.
\end{enumerate}

The following probabilistic lemma forms the basis of this algorithm's analysis.
\begin{lemma}
    Let $M$ be an $\alpha n$ partial matching for $\alpha \in [0,1/4]$. For uniformly random $K \subseteq [n]$, the probability that $K$ does not contain any edge in $M$ is $\leq exp(-\alpha |K|^2/n)$. Consequently, for $\epsilon > 0$, choosing $|K| = \sqrt{\frac{-\ln(\epsilon)n}{\alpha}}$ ensures that this occurs with probability $\leq \epsilon$.
\end{lemma}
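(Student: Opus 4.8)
The plan is to exploit the single structural feature of $M$ that matters here: its edges are pairwise vertex-disjoint, so the $\alpha n$ events ``edge $e_i$ is captured by $K$'' are governed by disjoint blocks of vertices. Concretely, I would model the sampled sketch $K$ by including each vertex of $[n]$ in $K$ independently with probability $p = |K|/n$ (the Poissonized version of drawing a size-$|K|$ subset, which is the model for which the stated inequality is tight; the fixed-size variant is addressed in the remark below). Writing $M = \{e_1,\dots,e_{\alpha n}\}$ with $e_i=\{u_i,v_i\}$, the event $E_i := \{e_i \subseteq K\}$ has $\Pr[E_i] = \Pr[u_i\in K]\,\Pr[v_i\in K] = p^2$, and since the pairs $\{u_i,v_i\}$ are disjoint across $i$, the events $E_1,\dots,E_{\alpha n}$ are mutually independent.

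The bound then follows in one line: using $1-x\le e^{-x}$,
\[
\Pr\Big[\bigcap_{i=1}^{\alpha n}\overline{E_i}\,\Big]
= (1-p^2)^{\alpha n}
\le \exp\!\big(-\alpha n\, p^2\big)
= \exp\!\big(-\alpha |K|^2/n\big).
\]
For the ``consequently'' clause I would simply set the right-hand side equal to $\epsilon$: solving $\exp(-\alpha|K|^2/n)=\epsilon$ gives $|K| = \sqrt{-\ln(\epsilon)\,n/\alpha}$, and any larger value of $|K|$ only decreases the probability further, so this choice suffices.

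Since the argument is essentially a one-liner, there is no substantive obstacle; the only point worth a word of care is the sampling model. If one insists that $K$ be a uniformly random $k$-subset (as the algorithm literally samples), the $E_i$ are no longer independent, but the indicators $\{\mathbf 1[v\in K]\}_{v\in[n]}$ are negatively associated, and each $\mathbf 1[E_i]$ is an increasing function of the disjoint block $(\mathbf 1[u_i\in K],\mathbf 1[v_i\in K])$; hence $\Pr[\bigcap_i\overline{E_i}] \le \prod_i(1-\Pr[E_i]) = \big(1-\tfrac{k(k-1)}{n(n-1)}\big)^{\alpha n}\le \exp\!\big(-\tfrac{\alpha k(k-1)}{n-1}\big)$, which agrees with $\exp(-\alpha k^2/n)$ up to lower-order corrections and does not affect the $\Theta(\sqrt{n/\alpha})$ sketch size used in the rest of the section. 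I would therefore present the independent-inclusion computation as the proof proper and relegate the negative-association variant to a remark.
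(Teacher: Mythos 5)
Your proof is correct and follows the same basic skeleton as the paper's: bound the probability that a single edge is missed, multiply over the $\alpha n$ vertex-disjoint edges, and apply $1-x\le e^{-x}$. The difference is the sampling model. The paper works directly with a uniformly random subset of fixed size $|K|$, computes $\Pr[E_1]=1-\frac{|K|(|K|-1)}{n(n-1)}$ exactly via binomial coefficients, then asserts that the per-edge events are independent---which is not literally true under fixed-size sampling---and closes with two $\approx$ steps (one of which, $1-\frac{|K|(|K|-1)}{n(n-1)}\approx 1-|K|^2/n^2$, goes the wrong way as an inequality). You instead Poissonize: with independent inclusion at rate $p=|K|/n$ the per-edge events are genuinely independent, and the stated bound $\exp(-\alpha|K|^2/n)$ follows exactly in one line; you then recover the fixed-size case via negative association of the inclusion indicators, at the cost of the slightly weaker exponent $\alpha k(k-1)/(n-1)$. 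Your negative-association remark is exactly the tool that makes the product step rigorous in the fixed-size model, so on that point your write-up is tighter than the paper's; the only caveat is that, read literally for a fixed-size $K$, the clean constant in the lemma holds only up to the lower-order correction you acknowledge---the same slack the paper hides in its $\approx$ signs, and immaterial to the $\Theta(\sqrt{n/\alpha})$ sketch-size conclusion used downstream.
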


\begin{proof}
    For $j \in [|M|]$, let $E_j$ be the event that the $j$th edge $(u_j, v_j)$ does not have both vertices in $K$. Since $K$ is chosen uniformly at random, we have that the $\{E_j\}$ are independent. We are interested in $Pr[E_M]$ for $E_M = \cap_j E_j$. Suppose that $|M| = 1$. There are two cases: (i) $\{u,v\} \cap K = \emptyset$ and (ii) $|\{u,v\} \cap K| = 1$. The probability of the first case is given by $\binom{n-2}{|K|} / \binom{n}{|K|}$. In the second case, suppose $u \in K$ but not $v$. This happens with probability $\binom{n-2}{|K|-1} / \binom{n}{|K|}$. The full probability of (ii) is $2\binom{n-2}{|K|-1} / \binom{n}{|K|}$. So:

    \begin{align}
        Pr[E_1] = \frac{\binom{n-2}{|K|} + 2*\binom{n-2}{|K|-1}}{\binom{n}{|K|}} &= \frac{(n-|K|)(n+|K|-1)}{n(n-1)} \\
        &= 1-\frac{|K|(|K|-1)}{n(n-1)} \\
        &\approx 1 - \frac{|K|^2}{n^2}
    \end{align}

    Now suppose that we have $m = \alpha n \geq 1$ edges. By independence of $\{E_j\}$:

    \begin{align}
        Pr[ E_M] = \prod_{j=1}^{\alpha n}Pr[E_j] &= \left(1-\frac{|K|^2}{n^2} \right)^{\alpha n} \\
        &\approx \exp \left(-\alpha |K|^2/n \right)
    \end{align}

\noindent Choosing $|K|=\sqrt{\frac{-\ln(\epsilon)n}{\alpha}}$ results in $Pr[E_M] \leq \epsilon$.
\end{proof}

A simple application of this lemma gives the space lower bound for the classical algorithm. Moreover, this matches the lower bound given by \cite{BHM1,BHM2,BHM3}.

\begin{cor}
    By setting $k = \left\lceil \sqrt{\frac{\ln(3) n}{\alpha}} \right\rceil$ (with $\epsilon = 1/3$ above), the classical algorithm succeeds with probability 2/3.
\end{cor}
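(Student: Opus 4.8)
The plan is to reduce the corollary to the preceding lemma together with the promise structure of the Hidden Matching problem. Correctness of the classical algorithm hinges entirely on a single random event: whether the sampled set $S$ of $k$ vertices contains at least one edge of the matching $M$. Since $S$ is a uniformly random $k$-subset of $[n]$, it is distributed exactly as the set $K$ in the lemma with $\lvert K\rvert = k$. The tail bound $\exp(-\alpha\lvert K\rvert^2/n)$ there is decreasing in $\lvert K\rvert$, so taking $k = \lceil \sqrt{\ln(3)\,n/\alpha}\,\rceil \ge \sqrt{\ln(3)\,n/\alpha}$ gives
\[
\Pr[\,S \text{ contains no edge of } M\,] \;\le\; \exp\!\left(-\frac{\alpha k^{2}}{n}\right) \;\le\; \exp\!\left(-\frac{\alpha}{n}\cdot\frac{\ln(3)\,n}{\alpha}\right) \;=\; e^{-\ln 3} \;=\; \frac{1}{3}.
\]

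Next I would observe that whenever $S$ does contain some edge $(u,v)\in M$, the algorithm answers correctly with certainty. Indeed, by the promise, in the YES case $x_u\oplus x_v = w_{uv}$ for \emph{every} edge of $M$ and in the NO case $x_u\oplus x_v = \overline{w_{uv}}$ for \emph{every} edge of $M$; hence, whichever edge of $S$ the stream reveals first, the test $x_u \oplus x_v = w_{uv}$ in step~2 distinguishes the two cases correctly, and no union bound over the edges of $M$ is needed. The algorithm can therefore err only in the complementary event that $S$ misses $M$ entirely, in which case the uniform guess in step~3 is still correct with probability $1/2$. Combining,
\[
\Pr[\text{success}] \;\ge\; \Pr[\,S \text{ contains an edge of } M\,]\cdot 1 \;+\; \Pr[\,S \text{ misses } M\,]\cdot \tfrac{1}{2} \;\ge\; \tfrac{2}{3} + \tfrac{1}{3}\cdot\tfrac{1}{2} \;=\; \tfrac{5}{6} \;\ge\; \tfrac{2}{3},
\]
so in fact already the crude bound $\Pr[\text{success}]\ge\Pr[\,S\text{ contains an edge}\,]\ge 2/3$ suffices.

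I do not anticipate a genuine obstacle: the statement is essentially a repackaging of the lemma. The only points that require a moment of care are (i) noting that the ceiling in the definition of $k$ only helps, by monotonicity of the lemma's tail bound in $\lvert K\rvert$ (and, in the degenerate regime $k\ge n$, the algorithm simply stores every vertex and is exact whenever $\alpha>0$); and (ii) making the promise argument explicit, so that detecting any single edge inside $S$ already pins down the answer without further probabilistic accounting.
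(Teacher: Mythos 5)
Your proposal is correct and follows the same route the paper intends: the corollary is a direct application of the lemma with $\epsilon = 1/3$, combined with the observation that, under the promise, capturing any single edge of $M$ in the sampled set determines the answer with certainty (and the random guess in step~3 contributes the remaining $1/2$, so your accounting even yields success probability at least $5/6 \ge 2/3$). Your added remarks about monotonicity in $k$ and the ceiling are sensible but not points the paper dwells on.
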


\noindent Note that the way the algorithm is presented, the total space is technically $k \lceil \log n \rceil$ (since we need $\lceil \log n \rceil$ bits to store each vertex in our sketch). Using Newman's Theorem \cite{newman} this may be reduced to $k + \lceil \log{n} \rceil$ bits -- however, the application of Newman's Theorem does not give a constructive algorithm. Furthermore, any style of algorithm that relies on detecting the existence of a matching pair relies on the probabilistic lemma, and so $k$ a lower-bound on the space complexity in general for this style of classical algorithm. We note that this matches up to constants the rigorous information-theoretic lower bounds to be discussed below.

\subsection{Classical Lower Bounds}
\label{sec:classical_supp_lower_bound}
Above, we determined the classical space necessary for the best-known classical algorithm from \cite{bar2004exponential}. In order to better compare against a quantum sketch, we also leverage techniques from communication complexity to provide an exact lower bound.

We define some notational preliminaries. We write $\{0,1\}^n$ for the Boolean cube, and identify $x\in\{0,1\}^n$ with a vector over $\mathbb{F}_2^n$.  
For a subset $S\subseteq[n]$, define the \emph{Walsh character} $\chi_S(x)=(-1)^{\sum_{i\in S} x_i}$, so that $\{\chi_S\}$ forms an orthonormal basis for functions $f:\{0,1\}^n\to\mathbb{R}$.  
The Fourier expansion of $f$ is
\begin{equation}
   f(x) = \sum_{S\subseteq[n]} \widehat f(S)\chi_S(x), \qquad \widehat f(S) = \mathbb{E}_x[f(x)\chi_S(x)]. 
\end{equation}
For a subset $A\subseteq\{0,1\}^n$, we define a Boolean indicator function $f=\mathbf{1}[x \in A]$ which tells whether an input $x$ belongs to the subset $A$. For such an indicator function $f$, the quantity 
\begin{equation}
 W^{(2)}(f) := \sum_{|S|=2}\widehat f(S)^2   
\end{equation}
is called the \emph{level-2 Fourier weight}, which measures the cumulative correlation of $f$ with all parities of pairs of bits.

\bigskip
In the following, we use $\alpha$-Partial Hidden Matching to refer to the Hidden Matching problem with implicit parameter $\alpha$.
\begin{theorem}[Classical Lower Bound for $\alpha$-Partial Hidden Matching]
\label{thm:phm-lower}
Let $n$ be even and $0<\alpha\le 1/4$.  
In the streaming model for the $\alpha$-Partial Hidden Matching problem, an algorithm must output an $\{i,j\}\in M$ together with the parity $x_i\oplus x_j$, given a stream of $x\in\{0,1\}^n$ and a matching $M$ of size $m=\alpha n$. Then any randomized algorithm with worst-case error at most $\epsilon<1/2$ must store 
\begin{equation}
   c\ge 
\frac{1}{e \ln 2}\Big(\tfrac12-\epsilon\Big)
\sqrt{\frac{n-1}{2\alpha}}
=\Omega\left(\sqrt{n/\alpha}\right) \text{ bits}. 
\end{equation}
\end{theorem}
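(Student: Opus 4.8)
The plan is to derive the lower bound via a reduction from one-way communication complexity to the streaming problem, combined with a Fourier-analytic bound on the level-2 weight of any low-information partition of the Boolean cube. First I would set up the standard reduction: Alice holds the bitstring $x\in\{0,1\}^n$, feeds it into the streaming algorithm as the sequence of vertex updates, and sends the resulting memory state (of $c$ bits, possibly randomized) to Bob; Bob holds the matching $M$ and the parity string $z$, continues the stream with the edge updates, and reads off the algorithm's output. Success of the streaming algorithm with error $\epsilon$ immediately gives a one-way protocol for $\alpha$-Partial Hidden Matching with communication $c$ and error $\epsilon$, so it suffices to lower-bound the one-way communication complexity.

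Next I would bound the communication complexity itself. The message partitions $\{0,1\}^n$ into at most $2^c$ combinatorial rectangles (fibers of the message function), and on each fiber Bob's output is a fixed function of $(M,z)$. For a fixed matching $M$, Bob must, from the message alone, predict some edge $\{i,j\}\in M$ together with $x_i\oplus x_j$; averaging over a uniformly random $x$ and a uniformly random edge of $M$, the advantage of any single rectangle $A$ over random guessing is controlled by how well the indicator $\mathbf{1}[x\in A]$ correlates with the parities $\chi_{\{i,j\}}$, i.e.\ by $\sqrt{W^{(2)}(\mathbf 1_A)}$ together with the density $|A|/2^n$. The key combinatorial input is a ``level-2 inequality'': for any $A\subseteq\{0,1\}^n$ with density $\mu=|A|/2^n$, one has $W^{(2)}(\mathbf 1_A)\le \mu^2\,(2\ln(1/\mu))^{2}/ \binom{n}{2}$-type bound — more precisely the standard Fourier level-$k$ inequality $\sum_{|S|=k}\widehat{\mathbf 1_A}(S)^2\le \mu^2\big(\tfrac{2e}{k}\ln(1/\mu)\big)^k$ specialized to $k=2$. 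Combining this with a Cauchy–Schwarz step that converts correlation with a random pair-parity drawn from the $m=\alpha n$ edges of $M$ into the level-2 weight over all $\binom n2$ pairs, and summing the per-rectangle advantages weighted by their densities, yields that the overall advantage over $1/2$ is at most (roughly) $2^{c/2}\cdot \big(\text{something}\big)\big/\sqrt{m\,(n-1)}$. Setting this $\ge \tfrac12-\epsilon$ and solving for $c$ produces the claimed bound $c\ge \frac{1}{e\ln 2}(\tfrac12-\epsilon)\sqrt{(n-1)/(2\alpha)}$, after tracking the constants $e$, $\ln 2$, and the factor $2$ carefully.

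Concretely, the steps in order: (1) state and verify the communication-to-streaming reduction, reducing to one-way communication complexity of $\alpha$-Partial Hidden Matching with error $\epsilon$; (2) fix an optimal deterministic message function (Yao's principle against the uniform distribution on $x$ and the uniform edge of $M$, with $M$ arbitrary but fixed) and write the success probability as a sum over the $\le 2^c$ rectangles; (3) express each rectangle's contribution to the advantage in terms of $\langle \mathbf 1_A,\chi_{\{i,j\}}\rangle$, and use Cauchy–Schwarz over the $m$ edges to bound it by $\sqrt{m}\cdot\sqrt{W^{(2)}(\mathbf 1_A)}/(\text{normalization})$; (4) apply the level-2 Fourier inequality to bound $W^{(2)}(\mathbf 1_A)\le \mu_A^2\,(e\ln(1/\mu_A))^2$ (absorbing the $\binom n2$), and use concavity/monotonicity to control $\sum_A \mu_A\ln(1/\mu_A)$ against $\log(2^c)=c\ln 2$; (5) assemble and solve for $c$.

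The main obstacle I anticipate is step (3)–(4): getting the constant exactly right. The level-$k$ Fourier weight inequality has a sharp form with the $(2e/k)\ln(1/\mu)$ factor, and one must be careful whether the relevant normalization divides by $\binom n2$ or $\binom{n}{2}$-vs-$m$, and how the Cauchy–Schwarz over the $m$ matching edges interacts with the sum over all $\binom n2$ Fourier coefficients of weight $2$. A secondary subtlety is the convexity argument bounding $\sum_A \mu_A \ln(1/\mu_A)\le c\ln 2$ by Jensen (since $\sum_A \mu_A=1$ and there are $\le 2^c$ terms, entropy is at most $c\ln 2$), which is clean but must be combined with the nonlinearity $\mu\mapsto\mu\,(\ln(1/\mu))^2$ via a second application of Jensen or a direct estimate — this is where an off-by-a-factor-of-$e$ typically creeps in, and matching the stated $\frac{1}{e\ln 2}$ prefactor pins down exactly which form of the inequality is being invoked.
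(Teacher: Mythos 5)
Your overall strategy is the same as the paper's: reduce to one-way communication, apply Yao's principle, study the fibers $A$ of the (deterministic) encoder through the level-2 Fourier weight of $\mathbf{1}_A$, and invoke the hypercontractive level-2 inequality $W^{(2)}(\mathbf 1_A)\le \mu_A^2\,e^2\ln^2(1/\mu_A)$ (which the paper derives from Bonami--Beckner by optimizing the noise rate, while you cite it directly). Your treatment of the fibers by a density-weighted sum with the entropy bound $\sum_A\mu_A\ln(1/\mu_A)\le c\ln 2$ is a legitimate (indeed cleaner) variant of the paper's pigeonhole step, which instead fixes a single fiber of density $\ge 2^{-c}$.

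However, there is a genuine gap in how you handle the matching: you propose to apply Yao's principle ``against the uniform distribution on $x$ \dots with $M$ arbitrary but fixed.'' For a fixed, known $M$ the claimed lower bound is simply false: the deterministic algorithm produced by Yao may be tailored to that $M$, so during the vertex phase it can just record $x_{i_0}\oplus x_{j_0}$ for one predetermined edge $\{i_0,j_0\}\in M$ and answer with zero error using $O(1)$ bits. Correspondingly, your Fourier step breaks down quantitatively: for fixed $M$ the only valid relation is the trivial $\sum_{\{i,j\}\in M}\widehat{\mathbf 1_A}(\{i,j\})^2\le W^{(2)}(\mathbf 1_A)$, which loses the factor $m/\binom{n}{2}$ that is the entire source of the $\sqrt{\alpha/n}$ scaling (a density-$1/2$ fiber fixing the parity of a single edge of $M$ already has advantage $1/2$). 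The paper's proof gets this factor precisely by putting a \emph{uniformly random} matching into the hard distribution and using $\mathbb{E}_M\big[\sum_{\{i,j\}\in M}\widehat f(\{i,j\})^2\big]=\frac{m}{\binom{n}{2}}W^{(2)}(f)$ together with Jensen, before applying the level-2 bound. Your later remark about ``a random pair-parity drawn from the $m$ edges of $M$'' gestures at this, but it is inconsistent with holding $M$ fixed, and your intermediate bound of the form $2^{c/2}\cdot(\cdot)/\sqrt{m(n-1)}$ is not of the right shape (the $c$-dependence must enter linearly via $\ln(1/\mu_A)\le c\ln 2$, and the advantage should grow, not shrink, with $m$). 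Fixing the argument requires making $M$ random in the hard distribution, bounding the algorithm's chosen-edge advantage by $\tfrac12\big(\sum_{\{i,j\}\in M}b_{ij}^2\big)^{1/2}$, and only then averaging over $M$; with that repair your plan matches the paper's proof and yields the stated constant.
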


\begin{proof}
By Yao's minimax principle~\cite{arora2009computational}, proving a lower bound for randomized protocols is equivalent to proving it for deterministic protocols under the hardest input distribution, which in this case is uniform over $\{0,1\}^n$.  
Thus we fix the public randomness, obtaining a deterministic encoder $E:\{0,1\}^n \to \{0,1\}^{\le c}$ where $c$ is the number of bits to store. The encoder partitions the cube into disjoint \emph{fibers}
\[
A_m = \{x : E(x)=m\},
\qquad \text{with}\quad
\sum_m |A_m| = 2^n.
\]

Since the algorithm can store $2^c$ distinct sketches, the pigeonhole principle implies that at least one fiber has density
\[
\gamma = \frac{|A_m|}{2^n} \ge 2^{-c}.
\]
We fix such an $m$ and denote $A:=A_m$.  

\medskip
Next, let $f=\mathbf{1}_A$ denote the indicator function of the fiber $A$.  
For each unordered pair $\{i,j\}$, define the parity character $\chi_{\{i,j\}}(x)=(-1)^{x_i\oplus x_j}$.
The Fourier coefficient $\widehat f(\{i,j\})=\mathbb{E}[f(x)\chi_{\{i,j\}}(x)]$ measures the correlation of the fiber with the parity $x_i\oplus x_j$. 

Since the prior on $x$ is uniform and $E$ is deterministic, Bob's posterior on $x$ given $m$ is uniform over $A$.
Hence the (posterior) bias of the parity on $\{i,j\}$ is
\begin{equation}
  b_{ij}
=\mathbb{E}[\chi_{\{i,j\}}(x)\mid x\in A]
=\frac{\mathbb{E}[\chi_{\{i,j\}}(x)f(x)]}{\mathbb{E}[f(x)]}
=\frac{\widehat f(\{i,j\})}{\gamma}.
\end{equation}

For a $\{\pm1\}$-valued random variable with mean $b$, the optimal constant predictor has success probability $(1+|b|)/2$; therefore the advantage over $1/2$ on the pair $\{i,j\}$ equals $|b_{ij}|/2$. If the matching $M$ has $m$ edges, the single-edge advantage is
\begin{equation}
\begin{split}
  \mathrm{Adv}(M\mid A)&=\max_{\{i,j\}\in M}\frac{|b_{ij}|}{2}
\le\frac{1}{2}\left(\sum_{\{i,j\}\in M} b_{ij}^2\right)^{1/2} \\
&=\frac{1}{2}\left(\sum_{\{i,j\}\in M}\frac{\widehat f(\{i,j\})^2}{\gamma^2}\right)^{1/2},  
\end{split}
\label{eq:adv}
\end{equation}
where we used $\max_k v_k \le \sqrt{\sum_k v_k^2}$ in the first inequality. Now, averaging over a uniformly random matching of size $m$ gives,
\begin{equation}
\begin{split}
  \mathbb{E}_M\Big[\sum_{\{i,j\}\in M}\widehat f(\{i,j\})^2\Big]
&=\frac{m}{\binom{n}{2}}\sum_{1\le i<j\le n}\widehat f(\{i,j\})^2 \\
&=\frac{m}{\binom{n}{2}}W^{(2)}(f),    
\end{split}
\end{equation}
and hence, by Jensen's inequality ($\mathbb{E}[\sqrt{X}]\le \sqrt{\mathbb{E}[X]}$) in \cref{eq:adv},
\begin{equation}
\mathbb{E}_M[\mathrm{Adv}(M\mid A)]\le
\frac{1}{2}\sqrt{\frac{m}{\binom{n}{2}}}\cdot \frac{\sqrt{W^{(2)}(f)}}{\gamma}.
\label{eq:advantage}
\end{equation}

Next, in order to control $W^{(2)}(f)$, we use the \emph{Bonami-Beckner hypercontractive inequality}~\cite{bonami1970etude, beckner1975inequalities}, 
as applied to Boolean functions by Kahn, Kalai, and Linial~\cite{kahn1988influence}. For a parameter $0 \le \rho \le 1$, we denote the \emph{noise operator} $T_\rho$ which acts on the indicator function $f$ by,
\begin{equation}
   (T_\rho f)(x)
=\mathbb{E}_y[f(y)], 
\end{equation}
where $y$ is obtained from $x$ by independently flipping each bit with probability $(1-\rho)/{2}$. In the Fourier domain, $T_\rho$ is diagonal,
\begin{equation}
 \widehat{T_\rho f}(S)=\rho^{|S|}\widehat f(S).   
\end{equation}

The Bonami--Beckner (hypercontractive) inequality states that for any Boolean function $f$ and any $0\le\rho\le1$,
\begin{equation}
   \|T_\rho f\|_2 \le \|f\|_{1+\rho^2}, 
\end{equation}
where $\|f\|_p=(\mathbb{E}[|f|^p])^{1/p}$.
Squaring both sides and expanding the left-hand side in the Fourier basis gives
\begin{equation}
    \sum_{S}\rho^{2|S|}\widehat f(S)^2
=\|T_\rho f\|_2^2
\le
\|f\|_{1+\rho^2}^2.
\end{equation}

For indicator functions $f=\mathbf{1}_A$ of density $\gamma=\mathbb{E}[f]=|A|/2^n$, we have that
$\|f\|_{1+\rho^2}^2 = \gamma^{2/(1+\rho^2)}$. 
Hence, for these functions,
\begin{equation}
\sum_{S}\rho^{2|S|}\widehat f(S)^2
\le
\gamma^{2/(1+\rho^2)}.  
\end{equation}
Grouping terms by their degree $k=|S|$ yields
\begin{equation}
\sum_{k=0}^{n}\rho^{2k}W^{(k)}(f)
\le
\alpha^{2/(1+\rho^2)},
\end{equation}
where $W^{(k)}(f):=\sum_{|S|=k}\widehat f(S)^2$.

All terms in this sum are nonnegative, so we can isolate the contribution from any single level.
For the level-$2$ Fourier weight we obtain
\begin{equation}
    \rho^{4}W^{(2)}(f) \le \gamma^{2/(1+\rho^2)} \Rightarrow W^{(2)}(f) \le
\rho^{-4}\gamma^{2/(1+\rho^2)}
\label{eq:walsh}
\end{equation}
Next, optimizing over $\rho$ yields the sharp bound 
\begin{equation}
W^{(2)}(f)\le e^2\gamma^2\ln^2\frac{1}{\gamma}.
\label{eq:level2}  
\end{equation}
To see this, let us denote $t = \rho^2 \in (0,1]$ and $l = \ln \frac{1}{\gamma}$. Then the expression in Eq~\ref{eq:walsh} can be rewritten as $\rho^{-4}\gamma^{2/(1+\rho^2)} = t^{-2}\exp\left(-\frac{2l}{1 + t}\right)$. Let us denote,
\begin{equation}
    g(t) = \ln\left(t^{-2}\exp\left(-\frac{2l}{1 + t}\right)\right) = -2\ln t - \frac{2l}{1+t}.
\end{equation}
We want to minimize $g(t)$ over $t\in (0,1]$. Computing $g'(t) = 0$ gives us two solutions,
\begin{equation}
    t = \frac{l-2 \pm l\sqrt{1 - 4/l}}{2}.
\end{equation}
In the large $l$ (or equivalently the small $\gamma$) regime, and given that $t \leq 1$ gives us the optimal choice of $t$ to be,
\begin{equation}
    t \approx \frac{1}{l} = \frac{1}{\ln 1/\gamma}.
\end{equation}
Plugging this in the above equation gives us the Eq~\ref{eq:level2},
\begin{equation}
    t^{-2}\exp\left(-\frac{2l}{1 + t}\right) \leq \ln^2 \frac{1}{\gamma}\gamma^2 e^2,
\end{equation}
where we have use the fact that $\exp(2l/(l+1)) \le e^2$.

Finally, substituting~\eqref{eq:level2} into~\eqref{eq:advantage} gives
\begin{equation}
   \mathbb{E}_M[\mathrm{Adv}(M\mid A)] \le (e \ln 2) c\sqrt{\frac{m} {\binom{n}{2}}}, 
   \label{eq:adv-bound}
\end{equation}
where we use the fact that $\gamma \ge 2^{-c}$, and thus $\ln(1/\gamma) \le c\ln2$.  If the overall error is at most $\epsilon$, then algorithm's success rate satisfies
\begin{equation}
    \tfrac12 - \epsilon \le  \mathbb{E}_M[\mathrm{Adv}(M\mid A)],
\end{equation}
which, together with \cref{eq:adv-bound} and $m = \alpha n$, yields
\begin{equation}
    c \ge
\frac{1}{e \ln 2}\Big(\tfrac12 - \epsilon\Big)
\sqrt{\frac{n-1}{2\alpha}}
=\Omega\big(\sqrt{n/\alpha}\big). 
\end{equation}
\end{proof}

\begin{cor}[Constant-Error Case]
For constant error $\epsilon=1/3$, any classical streaming algorithm for $\alpha$-Partiial Hidden Matching must satisfy
\begin{equation}
c\ge \frac{1}{(6e \sqrt{2})\ln 2} \sqrt{\frac{n-1}{\alpha}}. 
\end{equation}
\end{cor}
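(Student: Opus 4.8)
The plan is to obtain this as a direct specialization of Theorem~\ref{thm:phm-lower}, which already carries out all the substantive work. Recall that the theorem gives, for any worst-case error $\epsilon<1/2$, the bound
\[
c \;\ge\; \frac{1}{e\ln 2}\Big(\tfrac12-\epsilon\Big)\sqrt{\frac{n-1}{2\alpha}}.
\]
First I would set $\epsilon=1/3$ and simplify the prefactor using $\tfrac12-\tfrac13=\tfrac16$. This step is legitimate precisely because $1/3<1/2$, so the right-hand side is strictly positive and the bound is non-vacuous.

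Next I would pull the constant $\tfrac{1}{\sqrt2}$ out of the square root, writing $\sqrt{(n-1)/(2\alpha)}=\tfrac{1}{\sqrt2}\sqrt{(n-1)/\alpha}$, and collect all numerical factors into a single constant $\frac{1}{6e\ln 2}\cdot\frac{1}{\sqrt2}=\frac{1}{(6e\sqrt2)\ln 2}$. This produces exactly the claimed inequality
\[
c \;\ge\; \frac{1}{(6e\sqrt2)\ln 2}\sqrt{\frac{n-1}{\alpha}},
\]
and in particular reproduces the asymptotic $c=\Omega(\sqrt{n/\alpha})$ quoted in the main text and used as the classical break-even baseline in \cref{fig:experiment}C.

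There is essentially no obstacle here: the corollary is a one-line arithmetic consequence of the theorem, and all of the difficulty---Yao's minimax reduction to a deterministic encoder, the pigeonhole bound $\gamma\ge 2^{-c}$ on the densest fiber, the posterior-bias estimate $b_{ij}=\widehat f(\{i,j\})/\gamma$ together with the averaging over a random matching, and the Bonami--Beckner hypercontractive optimization over $\rho$ that bounds the level-2 weight by $W^{(2)}(f)\le e^2\gamma^2\ln^2(1/\gamma)$---is contained entirely in the proof of Theorem~\ref{thm:phm-lower}. The only point a careful reader should verify is that the $\sqrt{1/2}$ inside the theorem's square root is correctly absorbed, i.e.\ that $\frac{1}{e\ln 2}\cdot\frac16\cdot\frac{1}{\sqrt2}$ equals $\frac{1}{(6e\sqrt2)\ln 2}$; it does, so the two displayed constants agree and the corollary follows immediately.
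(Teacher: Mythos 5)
Your proposal is correct and matches the paper's (implicit) treatment: the corollary is stated as an immediate specialization of Theorem~\ref{thm:phm-lower} with $\epsilon=1/3$, and your arithmetic check that $\frac{1}{e\ln 2}\cdot\frac{1}{6}\cdot\frac{1}{\sqrt{2}}=\frac{1}{(6e\sqrt{2})\ln 2}$ is exactly what is needed.
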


\section{Gate Complexity of Quantum Pair Sketch}
\label{sec:quantum_pair_sketch_complexity}

Ref.~\cite{kallaugher2024howto} unified all known quantum streaming algorithms with space advantage in the one-pass model using a quantum pair sketch. With this quantum sketch, streaming algorithms for hidden matching, triangle counting and maximum directed cut can be viewed as classical algorithms querying this sketch as a black box. In this Section, we summarize the quantum pair sketch and detail its logical gate complexities in the general case.

A quantum pair sketch $Q_T$ of size $k$ summarizes the set $T \subset \{0, 1\}^{k}$. The sketch has four primitive operations whose resource requirements are as follows:
\begin{enumerate}
    \item \texttt{create(T)}: takes as input the set $T$ and returns the sketch $Q_T$ which may be implemented as a uniform superposition of the basis states
    \begin{equation}
        Q_T = \frac{1}{\sqrt{|T|}}\sum_{t\in T}\ket{t}.
    \end{equation}
    The complexity of constructing such a state depends on the structure of $T$. In general, it takes $O(\log_2|T|)$ operations~\cite{shukla2023efficient}.  For HM, this may be constructed with a single layer of $k-1$ Hadamard gates. 
     \item \texttt{query\_one(a)}: probabilistically checks if $a \in T$. This can be achieved with a single ancilla as follows. First, apply a single C${}^{k+1}$X (multi-controlled Toffoli with $k$ controls) to implement the operation $\ket{x}\ket{0} \to \ket{x}\ket{\delta_{x=a}}$. Second, use a Toffoli gate followed by a measurement to project the state onto $\ket{a}\bra{a}$ or $1-\ket{a}\bra{a}$, thereby achieving the functionality of \texttt{query\_one}(a)

    \item \texttt{query\_pair(a,b)}: probabilistically checks if two states $a \neq b$ are both in $T$. This may be constructed using measurements with projectors $P_{\pm} \propto (\ket{a}\pm \ket{b})(\bra{a}\pm\bra{b}), P_0 = 1-P_+ - P_-$. This three-outcome measurement may be performed using two measurement ancilla. First, we apply a unitary operation $U_{ab}$ that performs the transformation $\ket{0} \to \ket{a}+\ket{b}$ and $\ket{x} \to \ket{a} - \ket{b}$ for some $x$. Then we use two $C^{k+1}X$ (multi-controlled Toffolis  with $k$ controls) on the two ancillas to project to $\ket{a} + \ket{b}$ and $\ket{a}-\ket{b}$ (by selecting for states $0^k$ and $x$ respectively). Finally, after measurement, we apply $U_{ab}^\dagger$. The unitary $U_{ab}$ can be implemented with one Hadamard and no more than $k$ CX gates. Overall, the \texttt{query\_pair(a,b)} operation takes two H gates,  $2k$ CX, two X gates (needed to select for $x$ for one of the multi-controlled Toffolis), and two $C^{k+1}X$ gates.
   
    \item \texttt{update($\pi$)}: takes the sketch $Q_T$ and returns a new sketch $Q_{T'}$ where the elements are permuted by $\pi$:
        \begin{equation}
        Q_{T'} = \frac{1}{\sqrt{|T|}}\sum_{t\in T}\ket{\pi(t)}.
    \end{equation}
    A general permutation can be composed into a series of transposition that transpose $a \leftrightarrow b$ leaving everything else unchanged. Each of those transposition may be implemented following the almost-optimal prescription in Ref.~\cite{herbert2023almostoptimal}. An example construction of the transposition is given in \cref{fig:update_unitary}. As suggested in the figure, this takes 2 H gates, $2k$ CX gates, $3k$ X gates, 2 C${}^{k+1}$X gates using the unitary construction from Ref.~\cite{herbert2023almostoptimal}.
\end{enumerate}

\begin{figure*}
\begin{align*}
    \begin{quantikz}
        \lstick{$\ket{x_1}$} & & \gate{X^{a_1 \oplus b_1}}\gategroup[4,steps=1,style={dashed,rounded corners,fill=yellow!20, inner xsep=2pt},background,label style={label position=above,anchor=north,yshift=+0.3cm}]{$U_{a,b}$} & \gate{X^{a_1 \oplus 1}}\gategroup[4,steps=3,style={dashed,rounded corners,fill=blue!20, inner xsep=2pt},background,label style={label position=above,anchor=north,yshift=+0.3cm}]{$\Pi_a$} & \ctrl{4} & \gate{X^{a_1 \oplus 1}}& \gate{X^{b_1 \oplus 1}}\gategroup[4,steps=3,style={dashed,rounded corners,fill=blue!20, inner xsep=2pt},background,label style={label position=above,anchor=north,yshift=+0.3cm}]{$\Pi_b$} & \ctrl{4} & \gate{X^{b_1 \oplus 1}} & \gate{X^{a_1 \oplus b_1}}\gategroup[4,steps=1,style={dashed,rounded corners,fill=yellow!20, inner xsep=2pt},background,label style={label position=above,anchor=north,yshift=+0.3cm}]{$U_{a,b}$} &&\\
        \lstick{$\ket{x_2}$} & & \gate{X^{a_2 \oplus b_2}} & \gate{X^{a_2 \oplus 1}} & \ctrl{3} & \gate{X^{a_2 \oplus 1}}& \gate{X^{b_2 \oplus 1}} & \ctrl{3} & \gate{X^{b_2 \oplus 1}} & \gate{X^{a_2 \oplus b_2}} &&\\
        \lstick{$\ket{x_3}$} & & \gate{X^{a_3 \oplus b_3}} & \gate{X^{a_3 \oplus 1}} & \ctrl{2} & \gate{X^{a_3 \oplus 1}}& \gate{X^{b_3 \oplus 1}} & \ctrl{2} & \gate{X^{b_3 \oplus 1}} & \gate{X^{a_3 \oplus b_3}} &&\\
        \lstick{$\ket{x_4}$} & & \gate{X^{a_4 \oplus b_4}} & \gate{X^{a_4 \oplus 1}} & \ctrl{1} & \gate{X^{a_4 \oplus 1}}& \gate{X^{b_4 \oplus 1}} & \ctrl{1} & \gate{X^{b_4 \oplus 1}} & \gate{X^{a_4 \oplus b_4}} &&\\
        \lstick{$\ket{0}$} & \gate{H} & \ctrl{-4} &  & \targ{} &  &   & \targ{} &  & \ctrl{-4} & \gate{H} &
    \end{quantikz}
\end{align*}
\caption{Realization of the transposition between $\ket{a} \leftrightarrow \ket{b}$ following \cite{herbert2023almostoptimal}. The $\Pi_{a/b}$ operations project the state to $\ket{a}\bra{a}$ and $\ket{b}\bra{b}$. Note that whenever $a_i = b_i = 0$, the $X$ gates in the inner layer on qubit $i$ cancel out, leading to slight reduction in the total gate count.}
\label{fig:update_unitary}
\end{figure*}
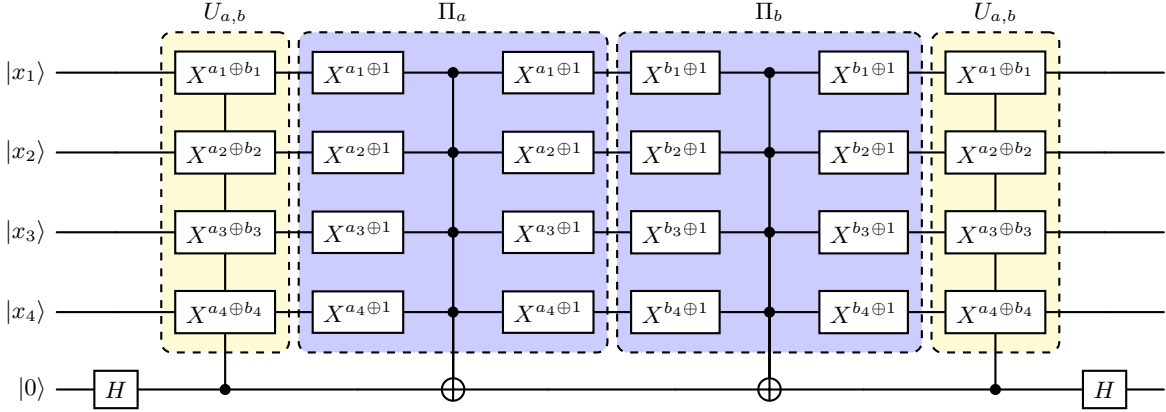

For specific problems with suitable structure, there may be more convenient methods of achieving the above operations. We provide HM-specific constructions with reduced cost in \cref{sec:hm-quantum-gate-counts}.

\section{Hybrid Quantum-Classical Streaming Algorithm for Hidden Matching Problem}
\label{sec:quantum-supp}
We now provide details of the quantum circuits used by the HM algorithm and the resources required to run it on quantum hardware. Due to the need to preserve a coherent quantum state throughout the process, we also analyze the resource requirements for running this algorithm fault-tolerantly in \cref{sec:FT_resource_estimation}.

\subsection{Space Requirements}
The quantum-classical hybrid sketching algorithm given in the main text requires $\lceil \log{n} \rceil + 2$ qubits to implement. Each quantum sketch gives a correct answer with probability $p_{\rm correct}=\alpha$, incorrect answer with probability $p_{\rm wrong}\leq \alpha/2$, and otherwise the null result $\perp$ (with probability $1-p_{\rm correct}-p_{\rm wrong} \in [1-\alpha-\alpha/2, 1-\alpha]$). We therefore need a majority voting procedure to ensure that the probability of getting an incorrect answer is suppressed (by boosting the $\geq \alpha/2$ separation between right and wrong answer). Let there be $k$ copies of the quantum sketch and let $p,q,k-p-q$ be the number of correct, incorrect, and $\perp$ outcomes, respectively. There are two ways for the vote to output the correct answer: (i) $p > q$ or (ii) if $p = q$ and the algorithm correctly guesses the right answer. This probability is given as
\begin{equation}\label{BHM_majority_vote_prob}
    \small 
    \begin{aligned}
        &\sum_{p=1}^k \sum_{q=1}^{\min{(k-p,p)}}  \frac{k!}{p! q! (k-p-q)!} \alpha^p \left(\frac{\alpha}{2}\right)^{q} \left(1-\frac{3\alpha}{2}\right)^{k-p-q} \\
        &+\frac{1}{2}\sum_{p=1}^{\lfloor k/2 \rfloor + 1}  \frac{k!}{p! q! (k-p-q)!} \frac{\alpha^{2p}}{2^{p}} \left(1-\frac{3\alpha}{2}\right)^{k-2p}.
    \end{aligned}
\end{equation}

We are interested in finding when \cref{BHM_majority_vote_prob} is greater than or equal to $2/3$. For $\alpha = 1/4$, this occurs at $k \geq 5$ (see \cref{fig:BHM_majority_vote}). Therefore, we need $5\cdot (\lceil \log(n)\rceil + 2)$ space to fully run the hybrid quantum-classical algorithm for HM with success probability $2/3$ noiselessly. Likewise, for general $\alpha$, the number of copies required is $k \leq 1.5/\alpha$.

\begin{figure}[!ht]
\includegraphics[width=\linewidth]{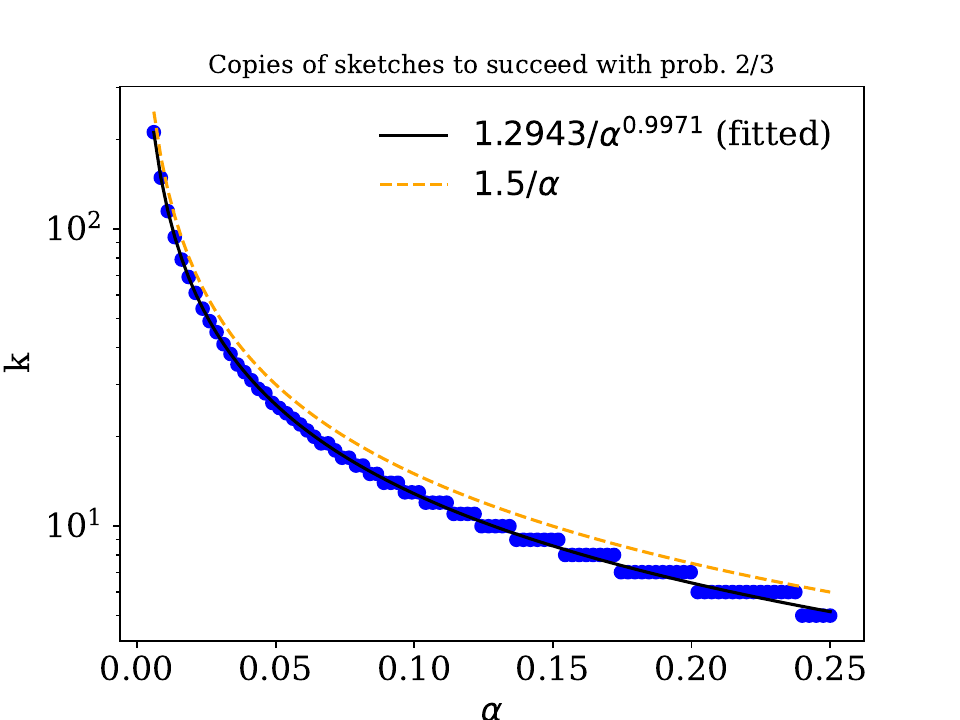}
    \caption{Smallest number of copies $k$ required for \cref{BHM_majority_vote_prob} to be $\geq 2/3$. Solid black line shows a fit that scales like $\Theta(1/\alpha)$. Dashed line in orange serves as a numeric upper-bound for $\alpha \in [0.01, 0.25]$.}
    \label{fig:BHM_majority_vote}
\end{figure}

\subsection{Gate Counts for HM Sketch}
\label{sec:hm-quantum-gate-counts}

We now discuss the operations necessary for a quantum sketch for HM in terms of the general operations enumerated in \cref{sec:quantum_pair_sketch_complexity}. The HM sketch is of size $\log(n) + 2$ bits, and is initialized (\texttt{create} in \cref{sec:quantum_pair_sketch_complexity}) as
\begin{equation}
    \frac{1}{\sqrt{2n}}\sum_{b\in \{0, 1\}} \sum_{x \in [n]}\ket{x, 0, b}
\end{equation}
this may be achieved  with H gates on the first $\log n$ qubits and the last qubit.

When the quantum sketch receives a vertex $(v, x_v)$, it performs an update $\ket{v, 0, b} \to \ket{v, x_v, b}$ for both $b \in \{0, 1\}$ . This corresponds to an  \texttt{update($\pi$)} operation with $\pi((v, 0, b))=(v, x_v, b)$ for $b \in \{0, 1\}$. While update operations are difficult in general, for HM, both of these transpositions (for the two values of $b$) can be achieved with a single $C^{\log(n) + 1}X$ gate, which conditionally flips the second-to-last qubit based on on the first $\log(n)$ qubits. 

Likewise, when the quantum sketch receives an edge update $(u, v, z_{uv})$, it performs four \texttt{query\_pair}(a,b) operations with
\begin{enumerate}
        \item $a=(u,0,0),b=(v,0,0)$
        \item $a=(u,0,1),b=(v,1,1)$
        \item $a=(u,1,1),b=(v,0,1)$
        \item $a=(u,1,0),b=(v,1,0)$
\end{enumerate}

We have $n$ vertex updates and each requires a single $C^{\log(n) + 1}X$ gate. Furthermore, each of the $\alpha n = n/4$ edge updates, requires four applications of \texttt{query\_pair}, each of which requires (as discussed in \cref{sec:quantum_pair_sketch_complexity}) 2 $H$, $\leq 2\log(n) + 2$ CNOTs, and two $C^{\log(n) + 3}X$ gates. Overall, the $n/4$ edge updates take a total of $2n$ $H$ gates, $2n(\log(n) + 2)$ CNOTs, and $2n$ many $C^{\log(n) + 3}X$ gates. However, note that, for worst-case circuits where even after the last edge, the quantum sketch fails to return a YES/NO answer, there is no need to uncompute with the basis transformation $U_{ab}^\dagger$. This reduces the Hadamard count by one and CNOT counts by $\log n + 2$.

Including the single \texttt{create} operation,  $n$ \texttt{update} operation and $4\times n/4$ \texttt{query\_pair} operation, the final worst-case, total gate counts are as follows:
\begin{itemize}
    \item  $(2n + \log n)$ H gates.
    \item  $(2n-1)(\log(n) + 2)$ CNOTs.
    \item  $n$ $C^{\log(n) + 1}X$ gates.
    \item  $2n$ $C^{\log(n) + 3}X$ gates.
\end{itemize}
Note that these are logical gate counts and apply for both non-fault-tolerant and fault-tolerant resource estimates. 

\subsection{Gate Counts for Non-Fault-Tolerant Implementation}

For the non-fault-tolerant implementation in our experiment, the multi-qubit Toffolis are decomposed into three-qubit Toffoli and relative Toffoli gates, following the prescription in \cite{Maslov_2016}. The three-qubit gates are further decomposed into single-qubit rotations and CNOT gates. The overall two-qubit gate count for different sizes of interest is presented in \cref{tab2:CX_counts}.

\begin{table}[!ht]
\begin{tabular}{|c|c|c|c|c|}
\hline
$n$  & T   & H     & CNOT              \\ \hline
4  & 212   &  98    & 196             \\ \hline
8  & 616   &  291    & 555             \\ \hline
16 & 1616  & 772   & 1434            \\ \hline
32 & 4000  & 1925  & 3513            \\ \hline
64 & 9536  &  4614   & 8312            \\ \hline
$n$  & $n(5+24\log(n))$ &  $(1+12n)\log n$ & \makecell{$20n\log(n)+8n$ \\ $-\log(n) - 2$} \\ \hline
\end{tabular}
\caption{Physical gate-counts for non-fault-tolerant HM sketch for worst-case circuits that don't terminate until the end of the data stream. Multi-controlled Toffolis are decomposed into single- and two-qubit gates following \cite{Maslov_2016}. The quantum program is written in terms of these gates.}
\label{tab2:CX_counts}
\end{table}

\section{Fault tolerant resource estimation for the HM algorithm}
\label{sec:FT_resource_estimation}

Here we detail the resource estimates for running the HM algorithm on a fault-tolerant quantum computer. 
We begin by estimating the target fidelity $\gamma$ required for quantum sketches to ensure that the quantum algorithm has a maximum tolerable failure probability.

As discussed in the main text, noisy quantum sketches with fidelity $\gamma$ raise the failure probability of the algorithm from $\delta(k, \alpha)$ to
\begin{equation}
    \delta'(k, \alpha, \gamma) = \delta(k, \alpha) + k \cdot (1-\gamma),
\end{equation}
where $k$ is the number of sketches used and $\alpha$ is the parameter of the HM problem.
To ensure the algorithm’s failure probability does not exceed a threshold (e.g., $1/3$), we numerically determine the maximum tolerable infidelity $1-\gamma$ for given values of $k$ and $\alpha$. Figure~\ref{fig:infidelity} illustrates this relationship for $\alpha = 1/4$.

\begin{figure}[tb]
    \centering
    \includegraphics[width=0.95\linewidth]{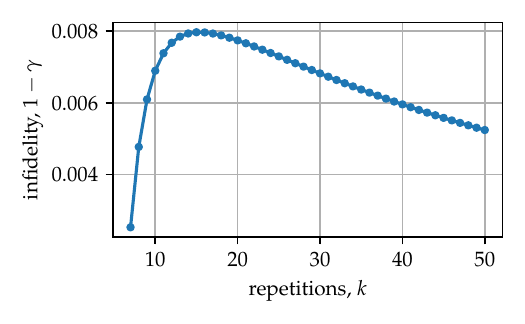}
    \caption{Numerical determination of tolerable infidelity $1-\gamma$ required to maintain failure probability of at most $1/3$ for $\alpha = 1/4$ and varying $k$, the number of repetitions of the sketch.}
    \label{fig:infidelity}
\end{figure}

Once the target infidelity $\gamma$ is established, we determine the required fidelity per gate, $F_*$, to ensure the overall algorithmic fidelity remains above $\gamma$. Suppose the total number of gates used in the implementation is $N_{\mathrm{gates}}$.
Then, the required gate fidelity is
\begin{align}
    F_{*} \geq 1 - \frac{1-\gamma}{N_{\mathrm{gates}}}.
\end{align}
For the HM algorithm, the total number of gates is given in Table~\ref{tab:gate_count}.

To achieve a target fidelity $F_*$ for the algorithm, both the memory error and the resource state (CCZ state) must have error rates below $1 - F_*$. The memory error is determined by error-correcting code parameters. For the surface code, this is achieved by setting the code distance to
\begin{equation}
    d = \left\lceil \frac{2\left(1 - \log(1 - F_*)\right)}{\log \left(p/p_{\mathrm{th}}\right)} \right\rceil,
\end{equation}
where $p$ is the physical error rate and $p_{\mathrm{th}} = 0.01$ is the threshold of the surface code.

For the two-gross code, we use code parameters and error rates provided in Table 2 of Ref.~\cite{yoder2025tour}.
Universal computation with the two-gross code is achieved through a modular assembly of code blocks, logical processing units (LPUs), and specialized adapters.
Each two-gross code module consists of 576 physical qubits for logical data storage, 158 qubits for the LPU, and 34 qubits for the code-code adapter.
The code-code adapter, which uses $2(d-1)$ qubits with code distance $d=18$, bridges the LPUs of adjacent modules via Bell-coupled ancilla qubits, enabling fault-tolerant measurement of joint logical Pauli operators between modules.
This inter-module connectivity is essential for scalable entangling operations and supports the execution of multi-qubit logical gates across the modular array.
The code-factory adapter, with a qubit count of $2d_\mathrm{factory}-1$ (where $d_\mathrm{factory}$ is the distance of the magic state factory), links code modules to dedicated magic state factories.
The architecture supports the use of magic states produced by surface code factories, which are well-benchmarked and compatible with the Bell-coupler infrastructure.

The total number of physical qubits for the HM algorithm is given by the sum of the total qubits required for memory and one resource state factory (assuming that resource states are consumed sequentially, one at a time).
For large problem sizes, the choice of error-correcting code and magic state factory protocol becomes critical. 
As shown in \cref{tab:Resource_estimation_BHM_algorithm_appendix}, for physical error rates of $p=1e^{-4}$, the logical error rates of the two-gross code (see Table 2 of Ref.~\cite{yoder2025tour}) satisfy the requirements of the HM algorithm for problem sizes $n \le 10^{13}$.
For $n > 10^{13}$, we employ the $\llbracket 360, 12, < 24 \rrbracket$ bivariate bicycle code~\cite{Bravyi2024}, assuming that its logical error rates remain below the threshold required by the HM algorithm.

Our resource estimates make the following assumptions about magic state factories:
\begin{itemize}
    \item For physical error rates of $10^{-4}$ and logical error rates above $10^{-14}$, we use magic state distillation protocols as described in~\cite{Litinski2019magicstate}.
    \item For physical error rates of $10^{-3}$ and logical error rates above $10^{-14}$, we use magic state cultivation techniques from~\cite{gidney2025factor}.
    \item Beyond these regimes, for $p = 10^{-4}$, we use the 8T-to-CCZ protocol from~\cite{gidney2025factor} and assume that T state infidelities of $10^{-10}$ to $10^{-12}$ are realistic, as supported by benchmarks in~\cite{Litinski2019magicstate}.
    \item For $p = 10^{-3}$, higher-fidelity T states can be produced by employing a color code of distance $d = 7$ (instead of $d = 5$), or by systematically growing to a $d = 7$ color code, which enables higher-fidelity injection stages. These T states can then be used to produce CCZ states of the required quality. However, these additional optimizations are expected to have only minor effects on the overall resource count and have not been included in our current estimates.
\end{itemize}

These assumptions provide a realistic and conservative basis for our resource estimates, but further improvements in code design and magic state preparation can further reduce the resource estimates for the quantum algorithm.

{
\renewcommand{\arraystretch}{1.5}
\begin{sidewaystable}
\centering
~\vspace{9cm}  %
\begin{longtable}{|c|c|c|c|c|c|c|c|c|c|c|c|} 
\hline
\multirow{3}{*}{$n$} 
& \multirow{3}{*}{\makecell{logical \\qubits}} 
& \multirow{3}{*}{\makecell{Toffoli\\ gates}} 
&\multirow{3}{*}{\makecell{CCZ \\ infidelity}} 
&\multicolumn{4}{c|}{surface code} 
& \multicolumn{1}{c|}{bivariate bicycle code} 
& \multicolumn{2}{c|}{classical bits} \\
\cline{5-11}
& & & & \multicolumn{2}{c|}{$p=10^{-3}$} & \multicolumn{2}{c|}{$p=10^{-4}$} 
& $p=10^{-4}$ 
&\multirow{2}{*}[0.5ex]{\makecell*[c]{best \\ known}} 
&\multirow{2}{*}[0.5ex]{\makecell*[c]{lower\\bound}}  \\
\cline{5-9}
& & & & distance & physical qubits & distance & physical qubits  & physical qubits & &
 \\
\hline
$10^4$    
& \numLogicalQubitsfourerrorratethree    
& $\gateCountfourerrorratefour$  
& $\errorRatebbfourerrorratefour$
&  \targetDistancefourerrorratethree
          &  $\numphysicalqubitsnewfourerrorratethree$ & \targetDistancefourerrorratefour 
          & $\numphysicalqubitsnewfourerrorratefour$    
          & $\numphysicalqubitsnewbbfourerrorratefour$ & $\classicalresourcefourerrorratethree$ 
          & $\classicalresourcebestfourerrorratethree$   \\
\hline
$10^5$    & \numLogicalQubitsfiveerrorratethree    
& $\gateCountfiveerrorratefour$    
& $\errorRatebbfiveerrorratefour$
&\targetDistancefiveerrorratethree
          &  $\numphysicalqubitsnewfiveerrorratethree$ & \targetDistancefiveerrorratefour 
          & $\numphysicalqubitsnewfiveerrorratefour$   
          & $\numphysicalqubitsnewbbfiveerrorratefour$ & $\classicalresourcefiveerrorratethree$ 
         & $\classicalresourcebestfourerrorratethree$ \\
\hline
$10^6$    
& \numLogicalQubitssixerrorratethree     
& $\gateCountsixerrorratefour$  
& $\errorRatebbsixerrorratefour$
&  \targetDistancesixerrorratethree
          &  $\numphysicalqubitsnewsixerrorratethree$ 
          & \targetDistancesixerrorratefour 
          & $\numphysicalqubitsnewsixerrorratefour$    
          & $\numphysicalqubitsnewbbsixerrorratefour$  
          & $\classicalresourcesixerrorratethree$ 
          & $\classicalresourcebestsixerrorratethree$ \\
\hline
$10^7$    
& \numLogicalQubitssevenerrorratethree   
& $\gateCountsevenerrorratefour$ 
& $\errorRatebbsevenerrorratefour$
&  \targetDistancesevenerrorratethree
          &  $\numphysicalqubitsnewsevenerrorratethree$ 
          & \targetDistancesevenerrorratefour 
          & $\numphysicalqubitsnewsevenerrorratefour$   
          & $\numphysicalqubitsnewbbsevenerrorratefour$ 
          & $\classicalresourcesevenerrorratethree$ 
          & $\classicalresourcebestsevenerrorratethree$ 
 \\
\hline
$10^8$    
& \numLogicalQubitseighterrorratethree   
& $\gateCounteighterrorratefour$  
& $\errorRatebbeighterrorratefour$
&  \targetDistanceeighterrorratethree
          &  $\numphysicalqubitsneweighterrorratethree$ & \targetDistanceeighterrorratefour 
          & $\numphysicalqubitsneweighterrorratefour$   
          & $\numphysicalqubitsnewbbeighterrorratefour$ 
          & $\classicalresourceeighterrorratethree$ 
          & $\classicalresourcebesteighterrorratethree$
\\
\hline
$10^9$    
& \numLogicalQubitsnineerrorratethree   
& $\gateCountnineerrorratefour$  
& $\errorRatebbnineerrorratefour$
&  \targetDistancenineerrorratethree
          &  $\numphysicalqubitsnewnineerrorratethree$ 
          & \targetDistancenineerrorratefour 
          & $\numphysicalqubitsnewnineerrorratefour$   
          & $\numphysicalqubitsnewbbnineerrorratefour$ 
          & $\classicalresourcenineerrorratethree$ 
          & $\classicalresourcebestnineerrorratethree$ 
 \\
\hline
$10^{10}$ 
& \numLogicalQubitstenerrorratethree    
& $\gateCounttenerrorratefour$    
& $\errorRatebbtenerrorratefour$
&  \targetDistancetenerrorratethree
          &  $\numphysicalqubitsnewtenerrorratethree$ 
          & \targetDistancetenerrorratefour 
          & $\numphysicalqubitsnewtenerrorratefour$    
          & $\numphysicalqubitsnewbbtenerrorratefour$ 
          & $\classicalresourcetenerrorratethree$ 
          & $\classicalresourcebesttenerrorratethree$ \\
\hline
$10^{11}$ 
& \numLogicalQubitselevenerrorratethree  
& $\gateCountelevenerrorratefour$ 
& $\errorRatebbelevenerrorratefour$
&  \targetDistanceelevenerrorratethree
          &  $\numphysicalqubitsnewelevenerrorratethree$ 
          & \targetDistanceelevenerrorratefour 
          & $\numphysicalqubitsnewelevenerrorratefour$   
          & $\numphysicalqubitsnewbbelevenerrorratefour$ 
          & $\classicalresourceelevenerrorratethree$ 
          & $\classicalresourcebestelevenerrorratethree$ \\
\hline
$10^{12}$ 
& \numLogicalQubitstwelveerrorratethree  
& $\gateCounttwelveerrorratefour$ 
& $\errorRatebbtwelveerrorratefour$
&  \targetDistancetwelveerrorratethree
          &  $\numphysicalqubitsnewtwelveerrorratethree$ 
          & \targetDistancetwelveerrorratefour 
          & $\numphysicalqubitsnewtwelveerrorratefour$   
          & $\numphysicalqubitsnewbbtwelveerrorratefour$ 
          & $\classicalresourcetwelveerrorratethree$ 
          & $\classicalresourcebesttwelveerrorratethree$ 
          \\
\hline
$10^{13}$ 
& \numLogicalQubitsthireteenerrorratethree 
& $\gateCountthireteenerrorratefour$ 
& $\errorRatebbthireteenerrorratefour$
&  \targetDistancethireteenerrorratethree
          &  $\numphysicalqubitsnewthireteenerrorratethree$ & \targetDistancethireteenerrorratefour 
          & $\numphysicalqubitsnewthireteenerrorratefour$   
          & $\numphysicalqubitsnewbbthireteenerrorratefour$ 
          & $\classicalresourcethireteenerrorratethree$ 
          & $\classicalresourcebestthireteenerrorratethree$ 
 \\
\hline
$10^{14}$ 
& \numLogicalQubitsfourteenerrorratethree 
& $\gateCountfourteenerrorratefour$  
& $\errorRatebbfourteenerrorratefour$
&  \targetDistancefourteenerrorratethree
          &  $\numphysicalqubitsnewfourteenerrorratethree$ & \targetDistancefourteenerrorratefour 
          & $\numphysicalqubitsnewfourteenerrorratefour$    
          & $\numphysicalqubitsnewbbfourteenerrorratefour$ & $\classicalresourcefourteenerrorratethree$ 
           & $\classicalresourcebestfourteenerrorratethree$ 
 \\
\hline
$10^{15}$ 
& \numLogicalQubitsfifteenerrorratethree  
& $\gateCountfifteenerrorratefour$ 
& $\errorRatebbfifteenerrorratefour$
&  \targetDistancefifteenerrorratethree
          &  $\numphysicalqubitsnewfifteenerrorratethree$ 
          & \targetDistancefifteenerrorratefour 
          & $\numphysicalqubitsnewfifteenerrorratefour$   
          & $\numphysicalqubitsnewbbfifteenerrorratefour$ 
          & $\classicalresourcefifteenerrorratethree$
        & $\classicalresourcebestfifteenerrorratethree$ 
 \\
\hline
\end{longtable}
\caption{\textbf{Fault-Tolerant Resource Estimates for the Hidden Matching (HM) Algorithm.}
Resource estimates for the HM quantum streaming algorithm using rotated surface codes and bivariate-bicycle codes, for a range of problem sizes $n$ and physical error rates ($p = 10^{-3}$ and $p = 10^{-4}$). 
The table lists logical qubits, Toffoli gates, code distances, and total physical qubit numbers required for both code families, as well as classical space requirements for the best-known and lower-bound classical algorithms.
For physical error rates of $p =10^{-4}$, the logical error rates of the two-gross code (see Table~2 of Ref.~\cite{yoder2025tour}) meet the requirements of the HM algorithm for problem sizes $n \leq 10^{13}$.
For $n > 10^{13}$, we employ the $\llbracket 360, 12, < 24 \rrbracket$ bivariate bicycle code from Ref.~\cite{Bravyi2024}, assuming an instruction set that achieves the logical error rates required for the HM algorithm.
Magic state factory fidelities are estimated using magic state distillation~\cite{Litinski2019magicstate} for $p = 10^{-4}$ and magic state cultivation~\cite{gidney2024magic} for $p = 10^{-3}$, with further plausible assumptions for T state infidelities and color code injection stages as described in the main text.}
\label{tab:Resource_estimation_BHM_algorithm_appendix}
\end{sidewaystable}
}

\end{document}